\newtheorem{definition}{Definition}
\newtheorem{proposition}{Proposition}
\newtheorem{lemma}{Lemma}
\newtheorem{theorem}{Theorem}
\newtheorem{corollary}{Corollary}
\newcommand{\N}{\ensuremath{\mathbb N}\xspace}
\newcommand{\F}{\ensuremath{\mathbb F}\xspace}
\newcommand{\cI}{\ensuremath{\mathcal I}\xspace}
\newcommand{\cP}{\ensuremath{\mathcal P}\xspace}
\newcommand{\cT}{\ensuremath{\mathcal T}\xspace}
\newcommand{\cY}{\ensuremath{\mathcal Y}\xspace}
\newcommand{\cpc}{\mathrm{cap}}
\title{Quasipolynomial multicut-mimicking networks
  and kernels for multiway cut problems\footnote{A preliminary version
    of this paper with weaker results was presented at ICALP 2020~\cite{selfICALP}.
    Compared to that version, the current manuscript has a much simpler
    correctness argument for the marking procedure, as well as adding
    the constructive version of the result.}}
\author{Magnus
  Wahlstr\"om\footnote{\texttt{Magnus.Wahlstrom@rhul.ac.uk};
    Department of Computer Science, Royal Holloway, University of
    London, UK.}}
\begin{document}

\maketitle
\begin{abstract}
  We show the existence of an exact mimicking network of $k^{O(\log k)}$ edges
  for minimum multicuts over a set of terminals in an undirected graph,
  where $k$ is the total capacity of the terminals. 
  Furthermore, using the best available approximation algorithm for 
  \textsc{Small Set Expansion}, we show that a mimicking network of
  $k^{O(\log^3 k)}$ edges can be computed in randomized polynomial time.
  As a consequence, we show quasipolynomial kernels for several problems,
  including \textsc{Edge Multiway Cut},
  \textsc{Group Feedback Edge Set} for an arbitrary group,
  and \textsc{Edge Multicut} parameterized by the solution size and the number
  of cut requests. 
  The result combines the matroid-based irrelevant
  edge approach used in the kernel for \textsc{$s$-Multiway Cut}
  with a recursive decomposition and sparsification of the graph along
  sparse cuts.  
  This is the first progress on the kernelization of \textsc{Multiway Cut} problems 
  since the kernel for \textsc{$s$-Multiway Cut} for constant value of $s$ (Kratsch and Wahlstr\"om, FOCS 2012). 
\end{abstract}


\section{Introduction}

Graph separation questions are home to some of the most intriguing
open questions in theoretical computer science.  In approximation
algorithms, the well-known \emph{unique games conjecture} (UGC)
has been central to the area for close to two decades,
and is closely related to graph separation problems.  
Even more directly, the \emph{small set expansion hypothesis},
proposed by Raghavendra and Steurer~\cite{RaghavendraS10STOC},
roughly states that it is NP-hard to approximate the \textsc{Small Set Expansion}
problem (SSE) up to a constant factor, where SSE is the problem of
finding a small-sized set in a graph with minimum expansion.
(More precise statements are given in Section~\ref{sec:graphsep}.)
For the general case, despite significant research, the best
polynomial-time result is an $O(\log n)$-approximation due to R\"acke~\cite{Racke08decomp},
but stronger results are known for special cases.
In particular, if the size bound on the set is $|S|\leq s$,
then Bansal et al.~\cite{BansalFKMNNS14SICOMP} show an algorithm with
an approximation ratio of $O(\log n/\sqrt{\log s})$. 

Another interesting notion from parameterized complexity is
\emph{kernelization}. Informally, a kernelization algorithm
is a procedure that takes an input of a parameterized, usually NP-hard problem
and \emph{reduces} it to an equivalent instance of size bounded in the parameter,
e.g., by discarding irrelevant parts of the input or transforming
some part of the input into a smaller object with equivalent behaviour.
For example, the seminal Nemhauser-Trotter theorem on the
half-integrality of \textsc{Vertex Cover}~\cite{NemhauserT75} implies
that an instance of \textsc{Vertex Cover} can be reduced
to have at most $2k$ vertices, where $k$ is the bound given on the solution size.
On the flip side, Fortnow and Santhanam~\cite{FortnowS11}
and Bodlaender et al.~\cite{BodlaenderDFH09} gave a framework to exclude
the existence of a kernel of any polynomial size, under a standard complexity-theoretic 
conjecture. An extensive collection of upper and lower bounds
for kernelization exists (see, e.g., the recent book of Fomin et
al.~\cite{FominLSZkernelsbook}), but a handful of central
``hard questions'' remain unanswered. One of the most 
notorious is \textsc{Multiway Cut}.

Let $G=(V,E)$ be a graph and $T \subseteq V$ a set of terminals in $G$.
An \emph{(edge) multiway cut} for $T$ in $G$ is a set of edges
$X \subseteq E$ such that no two terminals are connected in $G-X$,
and \textsc{Multiway Cut} is the problem of finding a multiway cut
of at most $k$ edges, given a parameter $k$.  The problem is FPT~\cite{Marx06MWC}
and NP-hard for $|T| \geq 3$~\cite{DahlhausJPSY94}.
Using methods from matroid theory, 
Kratsch and Wahlstr\"om~\cite{KratschW20JACM} were able to show 
that if $|T| \leq s$, then \textsc{Multiway Cut} has a kernel with
$O(k^{s+1})$ vertices, hence the problem has a polynomial kernel for
every constant $s$.  However, if $|T|$ is unbounded, the only known size bound for a
kernel is $2^{O(k)}$, following from the FPT algorithm~\cite{Marx06MWC},
and the question of whether \textsc{Multiway Cut} has a polynomial kernel in the
general case is completely open.

We make progress on this question by showing that \textsc{Multiway Cut}
and several related problems have \emph{quasipolynomial} kernels,
i.e., kernels of size $k^{\log^{O(1)} k}$.  Furthermore, the degree in
the exponent depends on the best available approximation algorithm for
\textsc{Small Set Expansion}.  With the current state of the art, we
are able to show kernels of size $k^{O(\log^3 k)}$; and if the small
set expansion problem has a constant-factor approximation,
the result would be kernels of size $k^{O(\log k)}$. 

The result goes via showing the existence of a kind of
\emph{mimicking network} for the problem; or more generally,
a network of quasipolynomial size mimicking the behaviour of $(G,T)$
for all multicut instances over $T$.  We review these notions next.

\subsection{Mimicking networks  and multiway cut sparsifiers}

Although kernelization is most commonly described in terms of
polynomial-time preprocessing as above, there is also a clear 
connection with \emph{succinct information representation}.
For example, consider a graph $G=(V,E)$ with a set of $k$ terminals
$T \subseteq V$. The pair $(G,T)$ is referred to as a \emph{terminal network}. 
A \emph{mimicking network} for $(G,T)$
is a graph $G'=(V',E')$ with $T \subseteq V'$ such that
for any sets $A, B \subseteq T$, the min-cut between $A$ and $B$
in $G$ and $G'$ have the same value.
A mimicking network of size bounded in $k$ always exists,
but the size of $G'$ can be significant. 
The best known general upper bound is double-exponential in~$k$~\cite{HagerupKNR98JCSS,KhanR14Mimic}, and
there is an exponential lower bound~\cite{KrauthgamerR13SODA}.
Better bounds are known for special graph classes, but even for planar graphs 
the best possible general bound has $2^{\Theta(k)}$ vertices~\cite{KrauthgamerR13SODA,KarpovPZ19Mimic} 
(see also recent improvements by Krauthgamer and Rika~\cite{KrauthgamerR20SIDMA}).

A related notion is \emph{cut sparsifiers}, which solve the same task
up to some approximation factor $q \geq 1$~\cite{Moitra09FOCS,LeightonMoitra10STOC},
typically $q=\omega(1)$ in the general case.  We focus on mimicking
networks; see Krauthgamer and Rika~\cite{KrauthgamerR20SIDMA} for an overview of cut sparsifiers.
However, we note that for general graphs, constant-factor cut sparsifiers
are not known with any size bound smaller than the size guarantee for an exact mimicking network.










However, if we include the capacity of the set of terminals in the
bound (and if edges have integer capacity), then significantly
stronger results are possible. Chuzhoy~\cite{Chuzhoy12STOC} showed
that if the total capacity of $T$ is $\cpc_G(T)=\sum_{t \in T} d(t)=k$, 
then there exists an $O(1)$-approximate cut sparsifier of size $O(k^3)$.
Kratsch and Wahlstr\"om~\cite{KratschW20JACM} sharpened this to
an exact mimicking network with $O(k^3)$ edges, which furthermore can be computed in
randomized polynomial time.  
This is particularly remarkable given that the
network has to replicate the exact cut-value for exponentially many
pairs $(A,B)$.
The network can be constructed via contractions in $G$.\footnote{The results
of \cite{KratschW20JACM} are phrased in terms of vertex cuts, but the above follows easily from \cite{KratschW20JACM}.}
This built on an earlier result that used linear representations of matroids to
encode the sizes of all $(A,B)$-min cuts into an object
using $\tilde O(k^3)$ bits of space~\cite{KratschW14TALG},
although this earlier version did not produce an explicit graph, i.e., not a mimicking
network.  

These results had significant consequences for kernelization.
The succinct representation in~\cite{KratschW14TALG} was 
used to produce a (randomized) polynomial kernel for
the \textsc{Odd Cycle Transversal} problem, thereby solving a notorious
open problem in parameterized complexity~\cite{KratschW14TALG};
and the mimicking network of~\cite{KratschW20JACM} brought further (randomized) polynomial kernels
for a range of problems, in particular including \textsc{Almost 2-SAT},
i.e., the problem of satisfying all but at most $k$ clauses of a given
2-CNF formula.

Similar methods are relevant for the question of separating a set of
terminals into more than two parts. Let $(G,T)$ be a terminal network,
and let $\cT=T_1 \cup \ldots \cup T_s$ be a partition of $T$. 
A \emph{multiway cut for $\cT$} is a set of edges $X \subseteq E(G)$
such that $G-X$ contains no path between any pair of terminals
$t \in T_i$ and $t' \in T_j$, $i \neq j$. 
Let us define a \emph{multicut-mimicking network} for $(G,T)$ as a
terminal network $(G', T)$ where $T \subseteq V(G')$ and for every
partition $\cT=T_1 \cup \ldots \cup T_s$ of $T$, 
the size of a minimum multiway cut for $\cT$ is identical in $G$ and $G'$.
(The term \emph{multicut-mimicking}, as opposed to \emph{multiway cut-mimicking},
is justified; see Section~\ref{sec:multi-mimicking}.)
The minimum size of a multicut-mimicking network, in terms of $k=\cpc_G(T)$,
appears to lie at the core of the difficulty of the question of a
polynomial kernelization of \textsc{Multiway Cut}.
The kernel for \textsc{$s$-Multiway Cut} mentioned above builds
on the computation of a mimicking network of size $O(k^{s+1})$ for partitions of $T$
into at most $s$ parts~\cite{KratschW20JACM}. The kernel for \textsc{$s$-Multiway Cut}
then essentially follows from considering the partition $\cT=\{t_1\} \cup \ldots \cup
\{t_s\}$ of a set $T$ of $|T|=s$ terminals, along with known reduction rules bounding
$\cpc_G(T)$. We are not aware of any non-trivial lower bounds on
the size of a multicut-mimicking network in terms of $k$; it seems completely
consistent with known bounds that every terminal network $(G,T)$ would
have a multicut-mimicking network of size poly$(k)$, even for
partitions into an unbounded number of sets.

In this paper, we show that any terminal network $(G,T)$ with $\cpc_G(T)=k$
admits a multicut-mimicking network $(G',T)$ where $|V(G')|=k^{O(\log k)}$;
and furthermore, a network with $|V(G')| = k^{O(\log^{O(1)} k)}$ can be
computed in randomized polynomial time, using a
sufficiently good approximation algorithm for a graph separation
problem similar to \textsc{Small Set Expansion} (SSE).
We also see a tradeoff between the quality of the approximation algorithm
and the size of $(G',T)$.
Using the algorithm of Bansal et
al.~\cite{BansalFKMNNS14SICOMP}, we achieve $|V(G')| = k^{O(\log^3 k)}$;
and if the small set expansion hypothesis were false and SSE
had a constant-factor approximation algorithm,
then the bound $|V(G')|=k^{O(\log k)}$ would be achievable in polynomial time.
We leave open the questions of whether there always exists a multicut-mimicking network
of size $k^{O(1)}$, as well as the question of whether a network of size $k^{O(\log k)}$
can be computed through means other than a constant-factor approximation for SSE. 

As a side note, we note that a 2-approximate ``multicut sparsifier'' of size poly$(k)$
can be computed efficiently using known methods.  Specifically, we observe that the
above-mentioned mimicking network of $O(k^3)$ edges for standard terminal cuts
in a terminal network $(G,T)$ with $\cpc(T)=k$~\cite{KratschW20JACM}
implies a terminal network $(G',T)$ where $\cpc(T)=k$,
$|E(G')|=O(k^3)$, and where for every partition $\cT$
the costs of minimum multiway cuts for $\cT$ in $(G',T)$ and $(G,T)$
differ by at most a factor of 2 (see Lemma~\ref{cut-cover-is-apxmcc}).

\emph{Flow sparsifiers.} Finally, similarly to cut sparsifiers, there
is a notion of a \emph{flow sparsifier} of a terminal network $(G,T)$.
Here the goal is to approximately preserve the minimum congestion for
any multicommodity flow on $(G,T)$. Chuzhoy~\cite{Chuzhoy12STOC} showed
flow sparsifiers with quality $O(1)$ and with $k^{O(\log \log k)}$ vertices,
where $k$ is the total terminal capacity; for further results on
achievable bounds for flow sparsifiers, see~\cite{AndoniGK14SODA,EnglertGKRTT14SICOMP}.
However, the notion is incomparable to multicut-mimicking networks,
because even an exact flow sparsifier would be
subject to the corresponding multicommodity flow-multicut approximation gap,
which is $\Theta(\log k)$ in the worst case~\cite{GargVY96}.

\emph{Further related work.} The general approach of decomposing a
graph along sparse cuts is well established; cf.~R\"acke~\cite{Racke02congestion}
and follow-up work. For further applications of matroid tools to
kernelization, see Hols and Kratsch~\cite{HolsK18SFVS}, Kratsch~\cite{Kratsch18VClovplum},
and Reidl and Wahlstr\"om~\cite{ReidlW18ICALP}.

\subsection{Our results}

We show the following.

%

\begin{theorem}
  \label{thm:intro:main}
  Let $A$ be an approximation algorithm for \textsc{Small Set Expansion}
  with an approximation ratio of $\alpha(n,k)$, 
  where 
  $k$ is the number of edges cut in the optimal solution.
  Let $(G,T)$ be a terminal network with $\cpc_G(T)=k$. 
  Then there is a set $Z \subseteq E(G)$ with $|Z| = k^{O(\alpha(n,k) \log k)}$
  such that for every partition $\cT=T_1 \cup \ldots \cup T_s$ of $T$, 
  there is a minimum multiway cut $X$ for $\cT$ such that $X \subseteq Z$.
  Furthermore, $Z$ can be computed in randomized polynomial time using  calls to $A$. 
\end{theorem}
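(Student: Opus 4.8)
The plan is to interleave a \emph{local} step, which handles terminal networks of small total capacity via the matroid / irrelevant‑edge machinery, with a \emph{global} step, which uses the approximation algorithm $A$ to decompose $G$ along sparse cuts until the local step becomes applicable. Call a set $Z\subseteq E(G)$ \emph{valid} for $(G,T)$ if for every partition $\cT$ of $T$ some minimum multiway cut for $\cT$ lies in $Z$; equivalently, contracting $E(G)\setminus Z$ changes no minimum multiway cut value. The local step is the black box: if $(H,S)$ is a terminal network with $\cpc_H(S)=\ell$, then a valid set of $\ell^{O(\ell)}$ edges can be computed in randomized polynomial time, by representing minimal multiway cuts of $(H,S)$ as independent sets of a randomly represented gammoid and extracting a representative family (this is the route used for the \textsc{$s$-Multiway Cut} kernel). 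In particular, when $\cpc_G(T)=O(1)$ we are done immediately, so the whole task is to drive $\cpc_G(T)$ down.

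I would set up a recursion on the total terminal capacity $\ell$ of a terminal network. If $\ell$ is below a fixed threshold, call the local step and return. Otherwise run $A$ to look for a \emph{sparse small set}: a vertex set $S$ with $1\le\cpc(S\cap T)\le\ell/2$ and $|\partial S|\le\tau\cdot\cpc(S\cap T)$, where $\tau<1$ is fixed; since $A$ finds a set of expansion at most $\tau$ whenever one of expansion at most $\tau/\alpha(n,k)$ exists, this succeeds unless the network is a ``terminal expander'' at this scale. In the successful case, take a maximal family $S_1,\dots,S_p$ of pairwise disjoint such sets ($p\le\ell$, since their capacities are positive and sum to at most $\ell$). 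For each $S_i$, recurse on the network $G_i$ obtained from $G[S_i]$ by subdividing every edge of $\partial S_i$ and declaring the subdivision vertices to be new terminals; its total capacity is $\cpc(S_i\cap T)+|\partial S_i|\le(1+\tau)\cpc(S_i\cap T)\le(1+\tau)\ell/2<\ell$, a genuinely smaller instance. Commit the returned valid sets $Z_i$ (pulled back to $E(G[S_i])$) together with all of $\bigcup_i\partial S_i$ to $Z$, and handle the remaining ``core'' $R=G-\bigcup_iS_i$, which has no sparse small set, by a base‑case argument. If no sparse small set exists at the top level, $R$ is all of $G$ and we pass directly to the base case.

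The structural heart of the argument is that these commitments are safe. Fix a partition $\cT$ and a minimum multiway cut $X$ for $\cT$. Its restriction $X\cap E(G[S_i])$ is a multiway cut of $G_i$ for the partition of $(S_i\cap T)\cup\{\text{$S_i$-endpoints of }\partial S_i\}$ induced by the components of $G-X$, and in fact a \emph{minimum} such cut: any cheaper local cut, spliced with $X\setminus E(G[S_i])$, would be a cheaper multiway cut for $\cT$ in $G$, because preserving the connectivity pattern of the local terminals preserves that of $T$. Hence a minimum local cut inside $Z_i$ can be substituted for $X\cap E(G[S_i])$ without changing $|X|$ or validity for $\cT$; doing this for every $i$, and recursing on $R$ likewise, shows that after contracting $E(G[S_i])\setminus(Z_i\cup\partial S_i)$ the accumulated $Z$ is valid for $(G,T)$. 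For the size bound, let $f(\ell)$ be the number of edges committed on a network of capacity $\ell$; the recursion gives $f(\ell)\le\ell\cdot f\!\big((1+\tau)\ell/2\big)+(\text{base contribution})$, with recursion depth $O(\log\ell)=O(\log k)$. After a preliminary reduction of $|V(G)|$ to $\mathrm{poly}(k)$ using the $O(k^3)$‑edge cut‑mimicking network (cf.\ Lemma~\ref{cut-cover-is-apxmcc}), the base contribution is $\mathrm{poly}(k)$, so unrolling yields $f(k)=k^{O(\log k)}$; tracking the true achievable threshold $\tau\asymp1/\alpha(n,k)$ through the recursion turns the depth into $O(\alpha(n,k)\log k)$ and gives the stated $|Z|=k^{O(\alpha(n,k)\log k)}$. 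The whole computation is randomized polynomial time given $A$ and the randomized gammoid representation.

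The step I expect to be the main obstacle is the base case: pinning down precisely why a terminal network with no $(\tau/\alpha)$‑sparse small set is, after the $\mathrm{poly}(k)$‑vertex preprocessing and after contracting or replacing terminal‑free and otherwise reducible regions, of essentially $\mathrm{poly}(k)$ size, so that taking all of its edges is both valid and affordable. A secondary point needing care is the bookkeeping across recursion levels — making sure that the subdivided boundary terminals do not let the quantity $k$ governing $\alpha$, or the instance size, drift upward, and that the scale $\ell/2$ at which $A$ is invoked is consistent with both the decrease of the progress measure and the final exponent.
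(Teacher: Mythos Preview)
Your recursive decomposition along sparse cuts and your splicing argument for the local pieces are essentially the paper's Lemma~\ref{lemma:recursive-may-contract}, so that part is fine. The gap is exactly where you flag it, and it is fatal as stated.

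First, the proposed preprocessing does not work. The $O(k^3)$-edge mimicking network of~\cite{KratschW20JACM} preserves only \emph{two-way} minimum cuts over $T$; for multiway cuts it is merely a $2$-approximation (this is precisely the content of Lemma~\ref{cut-cover-is-apxmcc}). Contracting edges outside it can change exact minimum multiway cut values, so you cannot use it to shrink $|V(G)|$ to $\mathrm{poly}(k)$ while staying exact. An exact reduction to $\mathrm{poly}(k)$ vertices would already be a polynomial multicut-mimicking network, which is the open problem. Second, your notion of sparse cut, $|\partial S|\le\tau\cdot\cpc(S\cap T)$, compares boundary to terminal mass and says nothing about $|S|$; its absence does not make the core small. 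The paper uses a different notion: $(G,T)$ is $(\alpha,c)$-\emph{dense} if every $S$ with $|S|\le|V|/2$ satisfies $\cpc_T(S)\ge|S|^{1/c}/\alpha$, i.e., $|S|\le(\alpha\,\cpc_T(S))^c$.

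The dense case is then handled not by bounding $|V(G)|$ but by a direct marking of at most $k^c$ edges containing every essential edge. The structural key is Proposition~\ref{prop:eventually-small}: for any minimum multiway cut $X$ with components $V_1,\dots,V_s$ of $G-X$ ordered by decreasing $\cpc_T$, one has $\cpc_T(V_i)\le 3k/i$. Under density this gives $|V_i|\le(3\alpha k/i)^c$, so for $i_0=\Theta(\alpha)$ and $c=\Theta(\alpha\log k)$ the total size of components $V_{i_0},\dots,V_s$ is at most $k^{c-i_0}$. A matroid with $i_0-1$ edge-cut gammoid layers (one per ``large'' component), one truncated graphic-matroid layer of rank $k^{c-i_0}$ (spanning the small components), and one uniform layer of rank $k$ (covering $X$) then lets the representative-sets lemma mark every essential edge while the product of layer ranks stays at $k^c$. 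This marking --- not the decomposition --- is the technical core of the proof, and your outline has no substitute for it.
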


Unfortunately, the best known ratio for SSE which can be stated purely in terms of $k$ and $n$
is $\alpha(n,k)=O(\log n)$~\cite{Racke08decomp}, and plugging this into the above formula
yields a vacuous result.  However, by a more careful analysis we are able to
show a better bound using the algorithm of Bansal et al.~\cite{BansalFKMNNS14SICOMP}.
In summary, we get the following.

\begin{corollary}
  \label{cor:main} Let $(G,T)$ be a terminal network with $\cpc_G(T)=k$. The following holds.
  \begin{enumerate}
  \item There is a multicut-mimicking network for $(G,T)$ with $k^{O(\log k)}$ edges.
  \item A multicut-mimicking network for $(G,T)$ with $k^{O(\log^3 k)}$ edges can be computed in randomized polynomial time.
  \end{enumerate}
\end{corollary}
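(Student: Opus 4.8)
The plan is to deduce both parts directly from Theorem~\ref{thm:intro:main} by choosing an appropriate approximation ratio $\alpha(n,k)$ for \textsc{Small Set Expansion} and then converting the irrelevant-edge set $Z$ into an actual multicut-mimicking network. For part (1), the existential statement, I would not use a polynomial-time approximation algorithm at all: since we only need \emph{existence} of the network, we may take $\alpha(n,k)=O(1)$ — indeed the optimum of SSE is computable (in exponential time), so the "algorithm'' $A$ can be taken to be exact, giving $\alpha \equiv 1$. Plugging $\alpha(n,k)=O(1)$ into Theorem~\ref{thm:intro:main} yields a set $Z \subseteq E(G)$ with $|Z|=k^{O(\log k)}$ such that for every partition $\cT$ of $T$ there is a minimum multiway cut $X \subseteq Z$. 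For part (2), the constructive statement, I would invoke the algorithm of Bansal et al.~\cite{BansalFKMNNS14SICOMP}: on instances where the optimal solution cuts $k$ edges — hence the small side $S$ has at most $k$ boundary edges and therefore can be taken with $|S|\le k$ in the relevant regime — their ratio $O(\log n/\sqrt{\log s})$ with $s=k$ becomes $O(\log n/\sqrt{\log k})$. A standard win-win argument on the size of $n$ (if $n$ is already quasipolynomial in $k$ we are essentially done by trivial bounds; otherwise $\log n = O(\log^{?} k)$) together with the recursive structure of the construction collapses this to an effective ratio of $\alpha = O(\log^2 k)$ per recursive level, so that $\alpha(n,k)\log k = O(\log^3 k)$; Theorem~\ref{thm:intro:main} then gives $|Z| = k^{O(\log^3 k)}$, computable in randomized polynomial time.

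The remaining step, common to both parts, is to pass from the set $Z$ of relevant edges to a genuine multicut-mimicking network $(G',T)$ of the claimed size. Here I would contract every edge of $E(G)\setminus Z$ in $G$: let $G'$ be the resulting multigraph, which retains $T$ (terminals are never merged as long as no relevant structure forces it — and if two terminals get identified, their multiway-cut cost is already $\infty$ consistently in both graphs). Then $|E(G')| \le |Z| = k^{O(\log k)}$ (resp. $k^{O(\log^3 k)}$), and $|V(G')| \le |E(G')|+1$ after discarding isolated vertices, so the size bounds are as claimed. Correctness is the key verification: for any partition $\cT$, a minimum multiway cut in $G'$ lifts to a multiway cut in $G$ of the same size (contracted edges were not needed), and conversely Theorem~\ref{thm:intro:main} guarantees a minimum multiway cut of $G$ using only edges of $Z$, which survives the contraction; hence the two optima coincide.

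The main obstacle I anticipate is the constructive bound in part (2): turning the $O(\log n/\sqrt{\log k})$ ratio of Bansal et al. into a clean bound \emph{purely in terms of $k$} requires controlling how $n$ shrinks through the recursive decomposition, and ensuring the approximation guarantee is applied in a regime ($|S|\le \mathrm{poly}(k)$, boundary $\le k$) where it is valid. This is exactly the "more careful analysis'' alluded to in the text after Theorem~\ref{thm:intro:main}, and I expect it to require tracking the instance parameters through each level of recursion rather than a black-box substitution. The edge-contraction correctness argument, by contrast, should be routine given Theorem~\ref{thm:intro:main}, and part (1) is essentially immediate once one observes that existence permits using an exact (exponential-time) SSE oracle.
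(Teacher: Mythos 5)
Part (1) of your proposal is exactly the paper's argument: since only existence is required, one can take a perfect expansion tester with $\alpha\equiv 1$, and Theorem~\ref{thm:intro:main} then gives a set $Z$ of size $k^{O(\log k)}$; contracting $E(G)\setminus Z$ yields the mimicking network. One small correction on the contraction step: your remark that identifying two terminals ``has cost $\infty$ consistently'' is not the right justification (contraction would make their cut cost $0$, not $\infty$). The reason contractions never merge two terminals is that any edge $uv$ with both endpoints in $T$ is essential (it must belong to every multiway cut separating $u$ from $v$), so it is always marked and never contracted; this is already encoded in the notion of multicut-covering set together with Prop.~\ref{prop:contract}, and no further argument is needed.

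Part (2) contains a genuine gap. First, the claim that ``the small side $S$ has at most $k$ boundary edges and therefore can be taken with $|S|\le k$'' is wrong, and in fact the opposite is what is needed: to make progress by recursing into $G_S$, one must find a set $S$ with $\cpc_T(S) < |S|^{1/c}$, i.e.\ $|S| > \cpc_T(S)^c$, so the useful sets $S$ have $|S|$ enormously larger than $k$. Second, the ``win-win on $n$'' is left as a gesture; the paper's actual argument (Section~\ref{sec:constructive} and Lemma~\ref{lemma:approximate-success}) is more delicate. It fixes $c=\lfloor (\log n)/(\log k)\rfloor - 1$ so that $k^c<n$ at every stage (so the marking always leaves an unmarked edge, and each contraction shrinks $n$, hence $c$). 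With the Bansal et al.\ ratio $\alpha(|S|)=O(\log n/\sqrt{\log|S|})$, choosing $i_0=\Theta(\sqrt{c\log k})$, the denseness bound $|V_i|\le (3\alpha k/i_0)^c$ collapses to $|V_i|\le k^{c-i_0-1}$ precisely when $k^{i_0}=2^{\Theta(\sqrt{c}\,\log^{3/2}k)}\le 2^{O(c)}$, which holds if and only if $c=\Omega(\log^3 k)$. That identity --- not an effective ratio of $O(\log^2 k)$ --- is where the exponent $\log^3 k$ comes from, and it is the calculation your proposal correctly flags as the ``main obstacle'' but does not carry out.
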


The latter implies several breakthrough results in kernelization, as follows.
We refer to previous kernelization work~\cite{KratschW20JACM} for the necessary definitions. 

\begin{corollary}
  The following problems have randomized quasipolynomial kernels. 
  \begin{enumerate}
  \item \textsc{Edge Multiway Cut} parameterized by solution size.
  \item \textsc{Edge Multicut} parameterized by the solution size and the number of cut requests.
  \item \textsc{Group Feedback Edge Set} parameterized by solution size, for any group.
  \item \textsc{Subset Feedback Edge Set} with undeletable edges, parameterized by
    solution size.
  \end{enumerate}
\end{corollary}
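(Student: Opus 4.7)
The plan is to follow the kernelization framework of Kratsch and Wahlstr\"om~\cite{KratschW20JACM} for each problem, but to substitute the $O(k^{s+1})$-size mimicking network (which only preserves cuts for partitions of $T$ into at most $s$ parts) with the multicut-mimicking network from Corollary~\ref{cor:main}(2), which preserves minimum multiway-cut values for \emph{every} partition of $T$. This is precisely the missing ingredient that prevented \cite{KratschW20JACM} from handling unbounded numbers of terminal classes.

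For \textsc{Edge Multiway Cut} with terminals $T$ and budget $k$, I would first apply standard preprocessing (flow-based reduction rules as used for \textsc{$s$-Multiway Cut}) to ensure that $\cpc_G(T) = \mathrm{poly}(k)$; in particular, an instance with $\cpc_G(T) > f(k)$ for a suitable polynomial $f$ can be reduced or declared infeasible. I then invoke Corollary~\ref{cor:main}(2) to compute a multicut-mimicking network $(G',T)$ with $k^{O(\log^3 k)}$ edges. Since the relevant instance is the partition $\cT = \{t\}_{t\in T}$, whose minimum multiway cut is preserved by definition, the pair $(G', T, k)$ is an equivalent instance of the required size. For \textsc{Edge Multicut} with $r$ cut requests and budget $k$, the terminal set $T$ has size at most $2r$; again preprocessing is used to bound $\cpc_G(T)$ by a polynomial in $k+r$, and then Corollary~\ref{cor:main}(2) is applied. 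Any valid multicut solution realises a partition of $T$, and its cost is preserved by the multicut-mimicking network, giving a kernel of size $(k+r)^{O(\log^3 (k+r))}$.

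For \textsc{Group Feedback Edge Set} and \textsc{Subset Feedback Edge Set} with undeletable edges, I would use the reductions already present in~\cite{KratschW20JACM} that encode these problems as multiway-cut-type questions on an auxiliary terminal network whose terminal set $T$ has $\cpc(T)$ polynomial in $k$. The original \cite{KratschW20JACM} reductions needed a bounded number of terminal classes precisely because they relied on a bound that depends on the number of parts $s$; here this restriction is removed because Corollary~\ref{cor:main}(2) handles arbitrary partitions. After building the auxiliary network one applies Corollary~\ref{cor:main}(2) to sparsify, and then translates back to an instance of the original problem.

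The main technical obstacle is verifying, for each of the four problems, that the reduction to a terminal network $(G,T)$ with $\cpc_G(T) = \mathrm{poly}(k)$ is compatible with the multicut-mimicking guarantee, i.e., that every feasible solution of the original problem corresponds to a multiway cut for some partition of $T$ whose value is preserved. For \textsc{Edge Multiway Cut} and \textsc{Edge Multicut} this is immediate; for the feedback-edge-set variants it follows from the fact that the encoding expresses ``no bad cycle'' as ``the terminals are pairwise separated according to some labeling-dependent partition,'' and the multicut-mimicking network preserves all such partition cuts simultaneously. The remaining bookkeeping (polynomial-time computation of the terminals, polynomial bound on $\cpc_G(T)$, and recovery of a solution in the original graph) is already carried out in the cited framework, so the quasipolynomial size bound follows directly from Corollary~\ref{cor:main}(2).
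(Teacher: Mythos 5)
Your overall strategy---reduce each problem to a terminal network whose terminal capacity is polynomial in the parameter, then apply the polynomial-time computable multicut-mimicking network---is exactly the paper's, and for \textsc{Edge Multiway Cut} and \textsc{Subset Feedback Edge Set} your argument matches the paper's proof. Two of the four reductions, however, lean on ``preprocessing'' or ``the cited framework'' at precisely the points where a concrete construction is required, and as written those steps would stall.

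For \textsc{Edge Multicut}, there is no reduction rule that bounds $\cpc_G(T)$ when $T$ is taken to be the set of request endpoints $\{s_i,t_i\}$: these vertices may have arbitrarily large degree, and unlike in \textsc{Multiway Cut} one cannot shrink them via isolating min-cuts, since a solution need not separate $s_i$ from the rest of the graph at all. The paper instead attaches fresh terminals $s_i',t_i'$ to $s_i,t_i$ by $p+1$ subdivided parallel edges, so that $\cpc_{G'}(T)=(p+1)\cdot 2r=O(k^2)$ holds \emph{by construction} and no solution of size at most $p$ can disconnect $s_i'$ from $s_i$; some such gadget is needed to make your ``preprocessing'' claim true. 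For \textsc{Group Feedback Edge Set} ``for any group'', deferring to the reductions of Kratsch and Wahlstr\"om~\cite{KratschW20JACM} is not quite enough: to obtain a terminal set with $\cpc(T)=\mathrm{poly}(k)$ one first needs an approximate solution $X_0$ computable with only oracle access to the group, and the paper gets $|X_0|=O(k\log k)$ from the $O(\log k)$-approximation for \textsc{Biased Graph Cleaning} of Lee and Wahlstr\"om~\cite{LeeW20plus}, then sets $T=V(X_0)$ and untangles the labels so that a solution becomes a multiway cut for an unknown partition of $T$. You correctly identify that handling an unbounded number of parts is what the new mimicking network buys, but the source of the terminal set itself must be made explicit for arbitrary groups.
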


\section{Preliminaries}

A \emph{parameterized problem} is a decision problem where inputs
are given as pairs $(X,k)$, where $k$ is the \emph{parameter}.
A \emph{polynomial kernelization} is a polynomial-time procedure 
that maps an instance $(X,k)$ to an instance $(X',k')$
where $(X,k)$ is positive if and only if $(X',k')$ is positive,
and $|X'|, k' \leq g(k)$ for some function $g(k)$ referred to as the \emph{size} of the
kernel.  A problem has a \emph{polynomial kernel} if it has a kernel where $g(k)=k^{O(1)}$.
We extend this to discuss \emph{quasipolynomial kernels}, which is the case
that $g(k)=k^{\log^{O(1)} k}$.  For more on parameterized complexity
and kernelization, see~\cite{CyganFKLMPPS15PCbook,FominLSZkernelsbook}.

For a graph $G=(V,E)$ and sets $A, B \subseteq V$,
we let $E_G(A,B)=\{uv \in E \mid u \in A, v \in B\}$.
As shorthand for $S \subseteq V$ we also write $E(S)=E(S,S)$,
$\partial_G(S)=E_G(S,V \setminus S)$,
and $\delta_G(S)=|\partial_G(S)|$. 
The \emph{total capacity} of a set of vertices $S$ in a graph $G$ is
\[
\cpc_G(S) := \sum_{v \in S} d(v).
\]
In all cases, we may omit the index $G$ if understood from context.


\subsection{Multicut-mimicking networks}
\label{sec:multi-mimicking}

Let $G=(V,E)$ be a graph and $T \subseteq V$ a set of terminals
with $\cpc(T)=k$.  An \emph{edge multiway cut} for $T$ in $G$
is a set of edges $X \subseteq E$ such that no two vertices in $T$
are connected in $G-X$.  More generally, let $\cT=\{T_1,\ldots,T_r\}$ be a partition of $T$.
Then an \emph{edge multiway cut for $\cT$ in $G$} is a set of edges $X \subseteq E$
such that in $G-X$ every connected component contains terminals from
at most one part of $\cT$.  Hence a multiway cut for $(G,T)$ is
equivalent to a multiway cut for $(G,\{\{t\} \mid t \in T\})$. 
Further, let $R \subseteq \binom{T}{2}$ be a set of pairs over $T$,
referred to as \emph{cut requests}. A \emph{multicut for $R$ in $G$} is
a set of edges $X \subseteq E$ such that every connected component
in $G-X$ contains at most one member of every pair $\{u,v\}  \in R$.
A \emph{minimum multicut for $R$ in $G$} is a multicut for $R$ in $G$ of
minimum cardinality. Similarly, a  \emph{minimum multiway cut for $\cT$ in $G$}
is a multiway cut for $\cT$ in $G$ of minimum cardinality.
For the rest of the paper, we let all cuts implicitly be edge cuts,
unless otherwise specified, hence we generally refer simply to
multiway cuts. 

We define a \emph{multicut-mimicking network} for $T$ in $G$ as a
graph $G'=(V',E')$ such that $T \subseteq V'$ and such that for
every set of cut requests $R \subseteq \binom{T}{2}$, the size of a
minimum multicut for $R$ is equal in $G$ and in $G'$.
We observe that this is equivalent to preserving the sizes of
minimum multiway cuts over all partitions of $T$.

\begin{proposition} \label{prop:multi-equals-multiway}
  A graph $G'$ with $T \subseteq V(G')$ is a multicut-mimicking
  network for $T$ in $G$ if and only if, for every partition $\cT$
  of $T$, the size of a minimum multiway cut for $\cT$
  is equal in $G$ and in $G'$.
\end{proposition}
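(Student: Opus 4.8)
The plan is to prove both directions by routing through a purely combinatorial identity relating minimum multicuts to minimum multiway cuts. For a set of cut requests $R \subseteq \binom{T}{2}$, call a partition $\mathcal P$ of $T$ \emph{$R$-respecting} if every pair in $R$ has its two endpoints in distinct blocks of $\mathcal P$. Write $\mathrm{mc}_H(R)$ for the size of a minimum multicut for $R$ in a graph $H$ (with $T \subseteq V(H)$), and $\mathrm{mwc}_H(\mathcal P)$ for the size of a minimum multiway cut for $\mathcal P$ in $H$. The core claim is that
\[
  \mathrm{mc}_H(R) \;=\; \min_{\mathcal P \ R\text{-respecting}} \mathrm{mwc}_H(\mathcal P),
\]
and, crucially, the index set of this minimum depends only on $R$, not on $H$. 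Both minima are over finite nonempty sets, since the all-singletons partition is always $R$-respecting and $E(H)$ is always a multiway cut, so the quantities are well-defined.

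To establish the identity, for ``$\ge$'' note that if $\mathcal P$ is $R$-respecting then any multiway cut $X$ for $\mathcal P$ is already a multicut for $R$: no component of $H - X$ meets two blocks of $\mathcal P$, so none contains both endpoints of an $R$-pair. Hence $\mathrm{mc}_H(R) \le \mathrm{mwc}_H(\mathcal P)$ for each such $\mathcal P$. For ``$\le$'', take a minimum multicut $X$ for $R$ and let $\mathcal P$ be the partition of $T$ whose blocks are the nonempty sets $C \cap T$ over connected components $C$ of $H - X$. Since $X$ is a multicut for $R$, every $R$-pair has its endpoints in distinct components, hence in distinct blocks, so $\mathcal P$ is $R$-respecting; and each component of $H - X$ contains terminals from at most one block of $\mathcal P$ (namely its own), so $X$ is a multiway cut for $\mathcal P$. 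Thus $\mathrm{mwc}_H(\mathcal P) \le |X| = \mathrm{mc}_H(R)$, completing the identity.

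Given the identity, the proposition follows immediately. If $\mathrm{mwc}_G(\mathcal P) = \mathrm{mwc}_{G'}(\mathcal P)$ for every partition $\mathcal P$ of $T$, then for every $R$ the two displayed minima agree term by term over their common index set, so $\mathrm{mc}_G(R) = \mathrm{mc}_{G'}(R)$; hence $G'$ is a multicut-mimicking network. Conversely, suppose $G'$ is a multicut-mimicking network and let $\mathcal T = \{T_1,\dots,T_r\}$ be a partition of $T$. Take $R$ to be the set of all cross pairs $\{u,v\}$ with $u \in T_i$, $v \in T_j$, $i \ne j$. Then a set of edges is a multicut for $R$ precisely when no component of the deleted graph contains a cross pair, i.e.\ precisely when it is a multiway cut for $\mathcal T$; therefore $\mathrm{mwc}_G(\mathcal T) = \mathrm{mc}_G(R) = \mathrm{mc}_{G'}(R) = \mathrm{mwc}_{G'}(\mathcal T)$.

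The argument is essentially bookkeeping, so I do not expect a genuine obstacle. The only point demanding a little care is the ``$\le$'' direction of the identity: one must verify that the partition induced by the components of $H - X$ is genuinely $R$-respecting and that $X$ is genuinely a multiway cut for it — in particular, that components containing no terminal cause no difficulty and that no component is ``split'' across two blocks. Both are immediate from the definitions once spelled out.
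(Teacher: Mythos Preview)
Your proof is correct and follows essentially the same approach as the paper: both directions hinge on the observation that a minimum multicut for $R$ equals a minimum multiway cut for the partition of $T$ induced by the components after deleting that multicut, and conversely that any partition $\cT$ arises from the cross-pair request set. Your explicit identity $\mathrm{mc}_H(R) = \min_{\mathcal P\ R\text{-respecting}} \mathrm{mwc}_H(\mathcal P)$ is a slightly cleaner packaging of the same idea (it makes the symmetry between $G$ and $G'$ transparent without having to argue both inequalities separately), but it is not a different argument.
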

\begin{proof}
  It is clear that the condition is necessary, since for any partition
  $\cT$ of $T$ we could form the set $R$ of all pairs over $T$ which
  lie in distinct parts of $\cT$, and a multicut for $R$ is then necessarily
  a multiway cut for $\cT$.  To see that the condition is also sufficient,
  consider an arbitrary set of cut requests $R \subseteq \binom{T}{2}$
  and let $X$ be a minimum multicut for $(G,R)$. Let $\cT$ be the
  partition of $T$ in $G-X$ according to connected components.
  Then $X$ is a multiway cut for $\cT$, and any multiway cut for $\cT$
  is also a multicut for $R$. Hence the size of a minimum multicut for $R$
  is precisely the size of a minimum multiway cut for $\cT$. 
\end{proof}

As a slightly sharper notion, a \emph{multicut-covering set} for $(G,T)$ is
a set $Z \subseteq E(G)$ such that for every set of cut requests $R \subseteq \binom{T}{2}$,
there is a minimum multicut $X$ for $R$ in $G$ such that $X \subseteq Z$.
Note that a multicut-covering set $Z$ is essentially equivalent to a multicut-mimicking
network formed by contraction (contracting all edges of $E(G) \setminus Z$). 
Our main result in this paper is the existence of a multicut-covering
set of size quasipolynomial in $k=\cpc(T)$ in any undirected graph $G$.
Furthermore, such a set can be computed in polynomial time, subject to
the existence of certain approximation algorithms that we will make precise
later in this section.

A multicut-covering set is a generalization of a \emph{cut-covering set},
used in previous work~\cite{KratschW20JACM}.
Formally, a cut-covering set for $(G,T)$ is a set $Z \subseteq E(G)$
such that for any partition $T=A \cup B$ there is an $(A,B)$-min cut
$X$ in $G$ with $X \subseteq Z$. By previous work, if $(G,T)$ is a terminal
network with $\cpc(T)=k$, then a cut-covering set of $O(k^3)$ edges
can be computed in randomized polynomial time~\cite{KratschW20JACM}.
We observe that this gives us a 2-approximate
multicut-covering set.

\begin{lemma} \label{cut-cover-is-apxmcc}
  Let $(G,T)$ be a terminal network and let $Z$ be a cut-covering
  set for edge cuts over $T$.  Then $Z$ is a 2-approximate
  multicut-covering set. 
\end{lemma}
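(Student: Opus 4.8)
The plan is to show that for any set of cut requests $R \subseteq \binom{T}{2}$, there is a multicut $X$ for $R$ with $X \subseteq Z$ and $|X| \leq 2 \cdot \mathrm{OPT}$, where $\mathrm{OPT}$ is the size of a minimum multicut for $R$ in $G$. The natural approach is to pass through the standard LP-rounding / ball-growing intuition for multicut, but realized combinatorially via the partition induced by an optimal solution, and to use the cut-covering property of $Z$ on the resulting bipartitions.

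First I would fix an optimal multicut $X^*$ for $R$ in $G$, with $|X^*| = \mathrm{OPT}$, and let $C_1, \ldots, C_m$ be the connected components of $G - X^*$. For each component $C_i$, consider the bipartition $T = (T \cap C_i) \cup (T \setminus C_i)$ of the terminals. Since $\partial_G(C_i) \subseteq X^*$ is an edge set separating $T \cap C_i$ from $T \setminus C_i$, the min cut between $T \cap C_i$ and $T \setminus C_i$ has size at most $\delta_G(C_i)$. By the cut-covering property, there is a minimum $((T \cap C_i),(T \setminus C_i))$-cut $X_i \subseteq Z$, so $X_i \subseteq Z$ and $|X_i| \leq \delta_G(C_i)$. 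Now set $X = X_1 \cup \cdots \cup X_m$; clearly $X \subseteq Z$. The point is that $X$ is a multicut for $R$: every request $\{u,v\} \in R$ must already be separated by $X^*$ (so $u,v$ lie in different components $C_i, C_j$), and then $X_i$ alone separates $u$ from $v$ since $u \in T \cap C_i$ and $v \in T \setminus C_i$; hence $u$ and $v$ lie in different components of $G - X$ as well.

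It remains to bound $|X|$. We have $|X| \leq \sum_{i=1}^m |X_i| \leq \sum_{i=1}^m \delta_G(C_i)$. Each edge of $G$ contributes to $\delta_G(C_i)$ for at most two indices $i$ (namely the components of its two endpoints, when they differ), and an edge contributes to $\delta_G(C_i)$ for some $i$ only if it crosses between two components, i.e.\ only if it lies in $X^* = \partial_G(\{C_1,\ldots\})$. Therefore $\sum_{i=1}^m \delta_G(C_i) \leq 2|X^*| = 2\,\mathrm{OPT}$, giving $|X| \leq 2\,\mathrm{OPT}$.

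I do not expect a serious obstacle here; the argument is essentially the observation that "union of terminal-side min cuts for each component" is both feasible and charges each solution edge at most twice. The one place to be careful is the edge case where some component $C_i$ contains no terminals of $T$ (or all of them), so that one side of the bipartition is empty — then the corresponding min cut is empty and $X_i = \emptyset$, which is harmless; and the double-counting bound should be stated cleanly, noting that an edge inside a single component $C_i$ contributes to no $\delta_G(C_j)$ at all, so only the $|X^*|$ inter-component edges are charged, each at most twice.
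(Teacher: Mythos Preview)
Your proof is correct, and it takes a genuinely different route from the paper's. The paper fixes a partition $\cT=T_1\cup\cdots\cup T_s$ of $T$, takes for each $i$ a minimum $(T_i,T\setminus T_i)$-cut $X_i'\subseteq Z$ of size $\lambda_i$, and sets $X'=\bigcup_i X_i'$. To bound $|X'|\leq\sum_i\lambda_i$ against $|X|$, the paper then invokes the half-integral multicommodity flow theorem (Schrijver, Cor.~73.2e), which guarantees a fractional multiflow of value $\tfrac12\sum_i\lambda_i$, hence $|X|\geq\tfrac12\sum_i\lambda_i\geq|X'|/2$.

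You instead look at the connected components $C_1,\ldots,C_m$ of $G-X^*$ and take min cuts for the induced bipartitions $(T\cap C_i,\,T\setminus C_i)$. The crucial difference is that your upper bound $|X_i|\leq\delta_G(C_i)$ comes from the \emph{explicit} cut $\partial_G(C_i)$, so the factor $2$ drops out of the elementary double-counting identity $\sum_i\delta_G(C_i)=2|X^*|$, with no need for any flow machinery. Your argument is strictly more elementary and self-contained; the paper's argument, on the other hand, shows that even the coarser collection of isolating cuts for the parts $T_i$ themselves (rather than for the components of an optimal solution) already suffices, which is a slightly stronger structural statement but requires the external multiflow result.
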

\begin{proof}
  Let $\cT=T_1 \cup \ldots \cup T_s$ be a partition of $T$ and let $X$
  be a minimum multiway cut for $\cT$. For $i \in [s]$,
  let $\lambda_i=\lambda(T_i, T \setminus T_i)$ be the size
  of an isolating min-cut for $T_i$ in $G$.  Then $Z$ contains
  a $(T_i, T \setminus T_i)$-cut of cardinality $\lambda_i$
  for every $i \in [s]$, and by taking their union we get a solution
  $X' \subseteq Z$ with $|X'| \leq \sum_{i=1}^s \lambda_i$.

  It is known~\cite[Cor.~73.2e]{SchrijverBook} that there exists a
  half-integral multiflow in $G$ for $\cT$ of value
  $\frac{1}{2} \sum_{i=1}^s \lambda_i \geq |X'|/2$.
  Hence $|X| \geq |X'|/2$, and $X'$ is a 2-approximate multiway cut
  for $\cT$.  By the argument of Prop.~\ref{prop:multi-equals-multiway},
  $X'$ is also a 2-approximate multicut-mimicking network for
  $(G,T)$. 
\end{proof}

Thus, in randomized polynomial time we can compute a 2-approximate multicut-covering
set for a terminal network $(G,T)$ with $O(\cpc(T)^3)$ edges.

\subsection{Graph separation algorithms}
\label{sec:graphsep}

The central technical approximation assumption needed in this paper is
the following. For a graph $G$ with a set of terminals $T$, define
the \emph{$T$-capacity of $S$ in $G$} as
\[
  \cpc_T(S) =  \cpc_G(T\cap S) + \delta_G(S).
\]
Then we define the following notion.\footnote{For the main results of the paper, 
  it suffices to assume a specialised version that focuses on cuts
  that cut through the terminal set.  The details were worked out 
  in the preliminary version of this paper~\cite{selfICALP},
  under the name \emph{sublogarithmic terminal expansion tester}.
  We use here a simplified definition that suffices for our results.}

\begin{definition}[Quasipolynomial expansion tester]
  \label{def:set-finder}
  Let $(G,T)$ be a terminal network with $\cpc_G(T)=k$.
  A \emph{quasipolynomial expansion tester (with approximation ratio $\alpha$)} is
  a (possibly randomized) algorithm that, given as input $(G,T)$ and
  an integer $c \in \N$, with $c=\Omega(\log k)$, does one of the following.  
  \begin{enumerate}
  \item Either returns a set $S \subset V$ such that $N[S] \neq V(G)$
    and $\cpc_T(S) < |S|^{1/c}$,
  \item or guarantees that for every set $S$ with
    $0 < |S| \leq |V(G)|/2$ and $N[S] \neq V(G)$ we have $\cpc_T(S) \geq |S|^{1/c}/\alpha$.
  \end{enumerate}
  More generally, we allow $\alpha \colon \N \to \N$ to be a function
  depending on $|S|$ in addition to $n$ and $k$,
  in which case the guarantee in the second item is $\cpc_T(S) \geq |S|^{1/c}/\alpha(|S|)$.   
  We say that $(G,T)$ is \emph{$(\alpha,c)$-dense} if
  case 2 above applies, i.e., for every set $S$ with $0 < |S| \leq |V(G)|/2$ and $N[S] \neq V(G)$
  we have $\cpc_T(S) \geq |S|^{1/c}/\alpha(|S|)$.
\end{definition}


We note that such algorithms follow from 
approximation algorithms for \textsc{Small Set Expansion};
indeed, the problem definitions are almost identical, except for the
parameter $c$.
Let $G=(V,E)$ be a graph and $S \subseteq V$ a set of vertices.
The \emph{edge expansion} of $S$ is
\[
\Phi(S) := \frac{\delta(S)}{|S|}.
\]
For a real number $\rho \in (0, 1/2]$, one also defines
the \emph{small set expansion}
\[
\Phi_\rho(G) := \min_{S \subseteq V, |S| \leq \rho n} \Phi(S).
\]
In particular, for a value $s \in [n/2]$,
$\Phi_{s/n}(G)$ denotes the worst (i.e., minimum) expansion among
subsets of $G$ of size at most $s$. 
Approximation algorithms for \textsc{Small Set Expansion}
imply quasipolynomial expansion testers, as follows.

\begin{lemma} \label{lemma:gives-approx}
  Assume that \textsc{Small Set Expansion} has a bicriteria
  approximation algorithm that on input $(G,\rho)$
  returns a set $S$ with $|S| \leq \beta \rho n$
  and $\Phi(S) \leq \alpha(\rho n) \cdot \Phi_\rho$,
  for some $\alpha, \beta \geq 1$ which may depend on $n$,
  $k=\Phi_\rho \rho n$, and $\rho$. 
  Also assume $c=\Omega(\log k)$ and $\alpha \leq O(\log n)$.
  Then there is a quasipolynomial expansion tester with approximation
  function $\alpha'(|S|) = \Theta(\alpha(|S|) \beta)$.
\end{lemma}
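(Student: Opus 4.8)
The plan is to reduce the construction of a quasipolynomial expansion tester to the bicriteria \textsc{Small Set Expansion} algorithm via a gadget that turns the $T$-capacity $\cpc_T(\cdot)$ into an ordinary edge boundary, together with a scan over size scales to cope with the size-dependent threshold $|S|^{1/c}$. From $(G,T)$ I would build $G'$ by keeping all of $G$, adding a ``rigid'' gadget $R$ on $N$ vertices — a clique, or a bounded-degree expander, with $N$ a sufficiently large polynomial in $n$, $k$ and $\beta$ — and joining each terminal $t$ to $R$ by $d_G(t)$ parallel edges. Then for every $S\subseteq V(G)$ we have $\delta_{G'}(S)=\delta_G(S)+\cpc_G(T\cap S)=\cpc_T(S)$ while $|S|$ is unchanged; and any vertex set meeting $R$ in a nonempty proper subset has boundary $\Omega(N)$ in $G'$, so, as long as the size budget stays below $N$, no set whose expansion is within the factor $\alpha=O(\log n)$ of optimal touches the gadget. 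Hence, for any scale $\rho$ with $\rho|V(G')|<N$, running the SSE algorithm on $(G',\rho)$ amounts to (approximately) minimizing $\cpc_T(S)/|S|$ over $S\subseteq V(G)$ with $|S|\le\rho|V(G')|$; this is where the hypotheses $\alpha\le O(\log n)$ (so the gadget beats $\alpha\,\Phi_\rho$ with $N$ still polynomial) and $c=\Omega(\log k)$ will enter.

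Given $(G,T,c)$, the tester scans $s$ over the powers of two up to roughly $|V(G)|/2$; for each $s$ it sets $\rho=s/|V(G')|$, runs the bicriteria SSE algorithm on $(G',\rho)$, and takes the returned set $S$, which by the gadget argument may be assumed to satisfy $S\subseteq V(G)$ and $|S|\le\beta s$ (the degenerate case $\Phi_\rho=0$, i.e.\ a terminal-free component of $G$, is caught directly and is itself a valid output). It then tests whether $\cpc_T(S)<|S|^{1/c}$ and $N[S]\ne V(G)$, and if so returns $S$ (case~1 of the tester). If no scale succeeds, it declares $(G,T)$ to be $(\alpha',c)$-dense with $\alpha'(x)=\Theta(\alpha(x)\beta)$.

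For correctness I would argue the contrapositive of the density claim. Suppose some $S^*$ with $0<|S^*|\le|V(G)|/2$ and $N[S^*]\ne V(G)$ satisfies $\cpc_T(S^*)<|S^*|^{1/c}/\alpha'(|S^*|)$, and take the scale $s$ with $|S^*|\le s<2|S^*|$. Then $S^*$ certifies $\Phi_\rho(G')\le\cpc_T(S^*)/|S^*|<|S^*|^{1/c-1}/\alpha'(|S^*|)$, so the SSE call returns $S\subseteq V(G)$ with $|S|\le\beta s$ and $\cpc_T(S)=\Phi_{G'}(S)\,|S|\le\alpha(s)\,\Phi_\rho(G')\,|S|<2\alpha(s)\beta\,|S^*|^{1/c}/\alpha'(|S^*|)$; combined with the crude lower bound $|S|\ge\delta_{G'}(S)/(\alpha(s)\Phi_\rho(G'))\ge 1/(\alpha(s)\Phi_\rho(G'))$ this traps $|S|$ in a range on which, since $c=\Omega(\log k)$ makes $x\mapsto x^{1/c}$ essentially flat under polynomial rescalings of $x$, one has $|S|^{1/c}=\Theta(|S^*|^{1/c})$. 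Picking the hidden constant in $\alpha'$ large enough — and using monotonicity of $\alpha$ and $\alpha=O(\log n)$ — then gives $\cpc_T(S)<|S|^{1/c}$, so this scale would have succeeded, a contradiction. Finally $N[S]\ne V(G)$ holds because $|N[S]|\le|S|+\delta_G(S)<|S|+|S|^{1/c}<|V(G)|$ whenever $|S|$ is bounded away from $|V(G)|$; this can be arranged by being careful with the scan, or by a complementation argument in the rare case that the returned set is so large as to dominate $V(G)$.

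The step I expect to be the main obstacle is this last one: controlling the interplay between the bicriteria size slack $\beta$, the factor-two granularity of the scan, and the \emph{size-dependent} threshold $|S|^{1/c}$. The returned set need not be of size comparable to $|S^*|$ — only $|S|\le\beta s$ plus the weak lower bound above — so one has to ensure that $|S|^{1/c}$ never drops below $\cpc_T(S)$, which is exactly what the assumptions $c=\Omega(\log k)$ and $\alpha=O(\log n)$ buy (the latter also serving to dominate the gadget's $\Omega(N)$ boundary). Building $G'$, performing the scan, and translating the output back are otherwise routine.
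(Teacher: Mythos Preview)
Your overall strategy---build a gadget graph $G'$ in which $\delta_{G'}(S)=\cpc_T(S)$ for $S\subseteq V(G)$, then scan size scales and invoke the SSE approximation---is the same as the paper's. Your gadget (one rigid block $R$ with $d_G(t)$ edges to each terminal) is in fact a bit cleaner than the paper's per-terminal cliques: it makes $\delta_{G'}(S)=\cpc_T(S)$ hold for \emph{all} $S\subseteq V(G)$, not just terminal-free ones, so you avoid the paper's preliminary step of stripping terminals from the witness and its use of the FPT algorithm for \textsc{Minimum Bisection} to argue that $|S\setminus T|=(1-o(1))|S|$.

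There is, however, a real gap in your correctness argument. Your route to $\cpc_T(S)<|S|^{1/c}$ is to first bound $\cpc_T(S)<2\alpha(s)\beta\,|S^*|^{1/c}/\alpha'(|S^*|)$ and then argue $|S|^{1/c}=\Theta(|S^*|^{1/c})$ via the ``crude lower bound'' $|S|\ge 1/(\alpha(s)\Phi_\rho)$. But that lower bound only gives $|S|\gtrsim\beta\,|S^*|^{1-1/c}$, so the ratio $|S^*|/|S|$ can be as large as $|S^*|^{1/c}$. Your claim that $x\mapsto x^{1/c}$ is flat under this rescaling would need $|S^*|^{1/c^2}=O(1)$, i.e.\ $\log n=O(c^2)$; with only $c=\Omega(\log k)$ and no bound relating $n$ to $k$, this fails. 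The fix is simple and is exactly what the paper does: do \emph{not} substitute $|S|\le 2\beta|S^*|$ into $\cpc_T(S)$ directly. Instead write
\[
\cpc_T(S)\le |S|\,\alpha(s)\,\Phi_\rho < |S|\,\alpha(s)\,|S^*|^{1/c-1}/\alpha'(|S^*|)
=(|S|/|S^*|)^{1-1/c}\cdot |S|^{1/c}\cdot \alpha(s)/\alpha'(|S^*|),
\]
and then use $|S|/|S^*|\le 2\beta$ together with $\alpha'(|S^*|)=\Theta(\alpha(|S^*|)\beta)$ (and $\alpha(s)\approx\alpha(|S^*|)$) to get $\cpc_T(S)<|S|^{1/c}$ for a suitable constant in $\alpha'$. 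No lower bound on $|S|$ is needed.
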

\begin{proof}
  Let $\alpha'(s)=2\alpha(s)\beta$.
  Assume that $(G,T)$ is not $(\alpha',c)$-dense for some parameter $c$,
  and let $S \subset V$ be a set witnessing this,
  i.e., $0 < |S| \leq |V(G)|/2$, $N[S] \neq V(G)$,
  and $\cpc_T(S) < |S|^{1/c}/\alpha'(|S|)$. For shorthand write $\alpha'=\alpha'(|S|)$.
  We argue that the set $S \setminus T$ is also a legal return value
  for the algorithm. Note
  \[
    \cpc_T(S \setminus T)=\delta(S \setminus T)
    \leq \delta(S)+\cpc_G(T \cap S) = \cpc_T(S).
  \]
  We also have $|S| > (\alpha' \cpc_T(S))^c \geq (\alpha' \cpc_T(S \setminus T)^c$. 
  Now, recall that \textsc{Minimum Bisection} is FPT parameterized by
  the solution value (i.e., the number of edges cut by an optimal
  solution), with the fastest FPT algorithm running in time
  $O^*(2^{O(p \log p)})$ for parameter $p$~\cite{CyganKLPPSW20plus}.
  Hence we can in polynomial time check for a bisection with
  $p=O(\log n/\log \log n)$ edges, and by replacing a vertex with a
  suitably large clique we can also check for a set $S'$ of cardinality $s$
  with $\delta(S') \leq p$. Hence in the remaining case we assume
  $\cpc_T(S) \geq \delta(S) \geq \Omega(\log n/\log \log n)$.
  Furthermore, by assumption $c=\Omega(\log k)$.
  Hence
  \[
    |S| \geq (\alpha' \cpc_T(S))^c \geq (\log n/\log \log n)^{\log k}=k^{\Omega(\log \log n)},
  \]
  and the difference in size between $|S|$ and $|S \setminus T| \geq |S|-k$
  is negligible.  Hence
  \[
    \Phi(S \setminus T) = \frac{\delta(S \setminus T)}{|S \setminus T|}
    \leq \frac{\cpc_T(S)}{(1-o(1))|S|}
    < (1+o(1))(1/\alpha') |S|^{1/c-1}.
  \]
  Now attach a large clique to every terminal in $G$, say of size
  $\beta |S| + 1$, forming a graph $G'$,
  and call an approximation algorithm for \textsc{Small Set Expansion}
  with a parameter of $\rho = |S|/|V(G')|$.  Assume that the algorithm
  returns a set $S' \subseteq V(G')$. Then $|S'| \leq \beta |S|$, 
  hence $S' \cap T = \emptyset$
  and $\cpc_T(S') = \delta(S')$.  Furthermore 
  $\Phi(S') \leq \alpha'' \Phi(S \setminus T)$
  where $\alpha''$ is the approximation guarantee of the algorithm
  on input $(G',\rho)$. Here, the only relevant difference between $\alpha''$
  and $\alpha$ lies in the difference between $|V(G)|=n$ and
  $|V(G'| \leq n + k (\beta |S| + 1)  \leq O(k \beta n)$.
  But since $\alpha$ by assumption depends on $n$ as $O(\log n)$ or slower,
  this difference is a lower-order term. Then
  \[
    \cpc_T(S')=  \delta(S') = |S'| \Phi(S') \leq |S'| \alpha \Phi(S \setminus T)
    < \frac{1+o(1)}{2\beta} |S'|\cdot  |S|^{1/c-1}.
    = \frac{1+o(1)}{2\beta} (|S'|/|S|)^{1-1/c}  |S'|^{1/c}.
  \]
  Now, since $|S'|/|S| \leq \beta$ we get $\cpc_T(S') < (1/2)(1+o(1))|S'|^{1/c}$,
  and $S'$ is a valid return value.  By repeating the
  above for all target sizes $|S|=\rho |V(G')|$ from $1$ to $|V(G)|$,
  we can be sure to identify such a set $S'$ if one exists. 
\end{proof}

The strongest general-case approximation is an $O(\log n)$-approximation due to R\"acke~\cite{Racke08decomp},
which is not a useful bound for us; but the 
bicriteria algorithm of Bansal et al.~\cite{BansalFKMNNS14SICOMP}
achieves a ratio of $O(\sqrt{\log n \log (1/\rho)})$,
giving $\alpha(|S|)=O(\log n/\sqrt{\log |S|})$,
which is (just barely) strong enough to get some results.
Stronger bounds, e.g., $\alpha=O(\log k)$ or even $\alpha=O(\sqrt{\log k})$,
would give improved results (cf.~Theorem~\ref{thm:intro:main}).

\section{Multicut-covering sets}\label{sec:mc-mimic}

We now present the main result of the paper, namely the existence of
quasipolynomial multicut-mimicking networks for terminal networks $(G,T)$,
and a method for computing them in randomized polynomial time
using an appropriate quasipolynomial expansion tester.

At a high level, the process works through recursive decomposition of
the graph $G$ across very sparse cuts, treating each piece $G[S]$ of the
recursion as a new instance of multicut-covering set computation,
where the edges of $\partial(S)$ are considered as additional terminals.
The process repeatedly finds a single edge $e \in E(G)$ with a guarantee
that for every set of cut requests $R \subseteq \binom{T}{2}$ there is
a minimum multicut $X$ for $R$ in $G$ such that $e \notin X$. 
We may then contract the edge $e$ and repeat the process.  Thus the
end product is a multicut-mimicking network, and the edges that
survive until the end of the process form a multicut-covering set.

In somewhat more detail, the process uses a variant of the
representative sets approach, which was previously used in
the kernel for \textsc{$s$-Multiway Cut}~\cite{KratschW20JACM}.
Refer to an edge $e$ as \emph{essential for $R$}, for some $R \subseteq \binom{T}{2}$,
if every minimum multicut for $R$ in $G$ contains $e$,
and \emph{essential for $(G,T)$} if it is essential for $R$ 
for some $R \subseteq \binom{T}{2}$.
We use a representative sets approach to return a set of at most $k^c$ edges
which is guaranteed to contain every essential edge, \emph{if}
$(G,T)$ is already $(\alpha,c)$-dense,
for an appropriate value $c=\Omega(\alpha \log k)$.
On the other hand, if $(G,T)$ is not $(\alpha,c)$-dense, 
then (by careful choice of parameters)
we can identify a cut through $G$ which is sufficiently sparse
that we can reduce the size of one side of this cut via a recursive call.
This gives a tradeoff between the size of the resulting
multicut-covering set and the denseness-guarantee we may assume
through the approximation algorithm.
When $\alpha$ is constant (or, more precisely, independent of $|S|$)
then this analysis gives a quite simple bound for an 
algorithm that computes a multicut-covering set of $k^c$ edges.

Unfortunately, the best bound on $\alpha$ independent of $|S|$
is just $\alpha=O(\log n)$, in which case the bound $k^c$
with $c=\Omega(\log n)$ is vacuous. We therefore also
perform a more careful analysis using the \textsc{Small Set Expansion}-
approximation algorithm of Bansal et al.~\cite{BansalFKMNNS14SICOMP},
and show that it allows us to compute a multicut-covering set
of $k^{O(\log^3 k)}$ edges in polynomial time.

\subsection{Recursive replacement}
\label{sec:rec-decomp}

We now present the recursive decomposition step in detail.
Let $(G,T)$ be a terminal network with $\cpc_G(T)=k$.
For a set $S \subseteq V$, we define
the graph
\[
  G_S = G[N_G[S]]-E(N_G(S)),
\]
i.e., $G_S$ equals the graph $G[S]$ with the edges of $\partial(S)$
added back in. We also denote
\[
  T(S) = (T \cap S) \cup N_G(S)
\]
as the \emph{terminals of $S$}.  Under these definitions,
the \emph{$T$-capacity of $S$ in $G$} has two equivalent definitions as
\[
  \cpc_T(S) = \cpc_{G_S}(T(S)) = \cpc_G(T \cap S) + \delta_G(S).
\]
The \emph{recursive instance at $S$} consists of the terminal network $(G_S, T(S))$.
This is the basis of our recursive replacement procedure. 
Indeed, we show the following.  Note that we consider 
 $E(G_S) \subseteq E(G)$ in the following.
 
\begin{lemma} \label{lemma:recursive-may-contract}
  Let $(G_S, T(S))$ be the recursive instance at $S$ for some $S \subseteq V(G)$.
  Let $Z_S$ be a multicut-covering
  set for $(G_S, T(S))$ and let $e \in E(G_S) \setminus Z_S$.
  Then $e$ is not essential for $(G,T)$.
\end{lemma}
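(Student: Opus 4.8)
The plan is to show that if $e \in E(G_S) \setminus Z_S$ is essential for some cut-request set $R \subseteq \binom{T}{2}$ in $G$, we derive a contradiction with $Z_S$ being a multicut-covering set for $(G_S, T(S))$. So fix $R$ and suppose every minimum multicut for $R$ in $G$ contains $e$. Let $X$ be a minimum multicut for $R$ in $G$. The key idea is to \emph{localise} the instance: decompose $X$ as $X = X_{\mathrm{in}} \cup X_{\mathrm{out}}$ where $X_{\mathrm{in}} = X \cap E(G_S)$ (edges inside $S$ together with boundary edges $\partial(S)$) and $X_{\mathrm{out}} = X \setminus E(G_S)$ (edges entirely outside $N[S]$). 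I want to argue that $X_{\mathrm{in}}$ is a minimum multicut for a suitable request set $R_S \subseteq \binom{T(S)}{2}$ in the recursive instance $G_S$, and then use $Z_S$ to exchange it for one avoiding $e$.

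First I would define $R_S$ to capture exactly the separation demands that $X_{\mathrm{in}}$ must satisfy inside $G_S$. The natural choice: contract $E(G) \setminus E(G_S)$ minus the cut, i.e.\ look at the connectivity pattern that the ``outside world'' $G - E(G_S)$, after removing $X_{\mathrm{out}}$, imposes on the terminals $T(S) = (T \cap S) \cup N_G(S)$. Concretely, in the graph $(G - X_{\mathrm{out}}) - E(G_S)$ (everything outside the recursive piece, with the outside part of the solution removed), two vertices of $T(S)$ may be linked by a path avoiding $E(G_S)$; such a pair \emph{must} be separated inside $G_S$, and moreover any original request $\{u,v\} \in R$ with both endpoints reachable (in $G - X$) into $T(S)$ gives rise to a $T(S)$-pair that must be cut. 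Let $R_S$ be the set of all pairs in $\binom{T(S)}{2}$ that are connected in $(G - X_{\mathrm{out}}) - E(G_S)$ or that must be separated to fulfil some request of $R$ routed through the boundary. Then I claim (i) $X_{\mathrm{in}}$ is a multicut for $R_S$ in $G_S$, and (ii) any multicut $X'_{\mathrm{in}}$ for $R_S$ in $G_S$ yields a multicut $X'_{\mathrm{in}} \cup X_{\mathrm{out}}$ for $R$ in $G$ of size $|X'_{\mathrm{in}}| + |X_{\mathrm{out}}|$. Part (ii) is the crucial ``gluing'' step: a path in $G - (X'_{\mathrm{in}} \cup X_{\mathrm{out}})$ between two requested terminals would alternate between segments inside $G_S$ and segments outside, and each outside segment connects two boundary vertices in $N_G(S) \subseteq T(S)$ that $R_S$ forbids from being connected inside $G_S$ — so such a path cannot exist. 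Part (i) follows because $X = X_{\mathrm{in}} \cup X_{\mathrm{out}}$ already separates all of $R$.

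Combining (i) and (ii): $|X_{\mathrm{in}}|$ must be the minimum multicut size for $R_S$ in $G_S$, since a smaller one would, via (ii), give a multicut for $R$ in $G$ smaller than $X$. Now invoke the hypothesis that $Z_S$ is a multicut-covering set for $(G_S, T(S))$: there is a \emph{minimum} multicut $X^*_{\mathrm{in}}$ for $R_S$ in $G_S$ with $X^*_{\mathrm{in}} \subseteq Z_S$, hence $e \notin X^*_{\mathrm{in}}$. Then $X^* := X^*_{\mathrm{in}} \cup X_{\mathrm{out}}$ is, by (ii), a multicut for $R$ in $G$ with $|X^*| = |X^*_{\mathrm{in}}| + |X_{\mathrm{out}}| = |X_{\mathrm{in}}| + |X_{\mathrm{out}}| = |X|$, so $X^*$ is a \emph{minimum} multicut for $R$ in $G$, and it does not contain $e$. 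This contradicts $e$ being essential for $R$, completing the proof.

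The main obstacle I anticipate is pinning down the definition of $R_S$ so that both directions (i) and (ii) hold cleanly — in particular handling requests of $R$ whose endpoints lie outside $S$ but whose only connecting paths pass through $G_S$, and making sure that ``connectivity outside, through the fixed $X_{\mathrm{out}}$'' is the right condition rather than something circular (it should not depend on $X_{\mathrm{in}}$). A clean way to sidestep subtleties is to phrase $R_S$ purely in terms of the fixed objects $X_{\mathrm{out}}$ and $R$: namely, declare $\{a,b\} \in R_S$ for $a,b \in T(S)$ iff there exist $u, v$ (possibly $u = a$, $v = b$) with $\{u,v\} \in R$ or $u = v \notin$ — more simply, iff $a$ and $b$ would end up in the same component of $G - X$ were we to \emph{not} cut inside $G_S$ at all but keep $X_{\mathrm{out}}$, together with witnessing a violated request; equivalently use the component structure of $(G - X_{\mathrm{out}})$ restricted to paths avoiding $E(G_S)$. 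One then checks the gluing argument edge by edge. The rest is bookkeeping.
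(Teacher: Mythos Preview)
Your approach is essentially the paper's: split a minimum solution $X$ into $X \cap E(G_S)$ and the rest, use $Z_S$ to replace the inside part, and glue back via a path-splitting argument at vertices of $T(S)$. The paper sidesteps your difficulty with $R_S$ by first invoking Prop.~\ref{prop:multi-equals-multiway} to pass to partitions, then taking the local instance to be the partition $\cT'$ of $T(S)$ induced by the connected components of $G-X$ (depending on \emph{all} of $X$, which is fine since $X$ is fixed --- your worry about circularity is unfounded); with $\cT'$ so defined, both your claims (i) and (ii) become immediate.
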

\begin{proof}
  By Prop.~\ref{prop:multi-equals-multiway}, it is sufficient to
  consider partitions $\cT$ of $T$ and minimum multiway cuts $X$ for $\cT$.
  Let $\cT$ be some partition of $T$, and let $X$ be a minimum
  multiway cut for $\cT$ in $G$. Let $\cT'$ be the partition of $T(S)$
  induced by the connected components of $G-X$ and $X_S=X \cap E(G_S)$.
  Then $X_S$ is a multiway cut for $\cT'$ in $G_S$.  Indeed, any path $P$
  in $G_S-X_S$ between distinct parts of $\cT'$ also exists in $G-X$.
  If $\cT'$ consists of a single part, then we have $X_S=\emptyset$, 
  as otherwise either $X$ contains an edge $uv$ whose both endpoints
  lie in the same connected component of $G-X$, or $G-X$ contains
  a connected component with no terminals, both of which contradict
  that $X$ is of minimum cardinality. Otherwise, by assumption there
  is a minimum multiway cut $X_S'$ for $\cT'$ in $G_S$ such that $e \notin X'$.
  We claim that $X':=(X \setminus X_S) \cup X_S'$ is a minimum
  multiway cut for $\cT$ in $G$.  Note that $|X'| \leq |X|$,
  hence it remains to show that $X'$ is a multiway cut.
  Assume for a contradiction that $G-X'$ contains a path $P$
  connecting different parts of $\cT$, and consider the partition of
  $P$ into subpaths induced by splitting at every vertex of $T(S)$
  that $P$ intersects. Note that every such subpath is either
  contained in $E(G_S)$ or disjoint from $E(G_S)$, and by assumption
  at least one such subpath is contained in $E(G_S)$, as otherwise $P$ uses only edges also
  present in $G-X$. But every such subpath goes between two vertices
  of $T(S)$ which lie in the same connected component of $G-X$ by
  definition of $\cT'$. Thus every such subpath starts and ends in a
  single connected component of $G-X$, contradicting that $P$ starts
  and ends in different components.   
  Therefore $X'$ is a minimum multiway cut for $\cT$ in $G$.
  Since $e \notin X'$ we are done.
\end{proof}

Let us also briefly note the formal correctness of contracting a
non-essential edge. Let $G/e$ denote the result of contracting $e$ in $G$.

\begin{proposition} \label{prop:contract}
  Let $e \in E(G)$ be a non-essential edge. Then for every $X \subseteq E(G)$
  with $e \notin X$, and every partition $\cT$ of~$T$, $X$ is
  a multiway cut for $\cT$ in $G$ if and only if it is a multiway cut
  for $\cT$ in $G/e$. Furthermore, $G/e$ is a multicut-mimicking
  network for $(G,T)$, and any multicut-covering set $Z \subseteq E(G/e)$
  for $(G/e,T)$ is also multicut-covering for $(G,T)$.
\end{proposition}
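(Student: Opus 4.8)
The plan is to prove the three assertions in sequence, observing that each one essentially follows from the previous together with the definition of ``non-essential''. First I would prove the equivalence of multiway cuts for $\cT$ in $G$ and in $G/e$ for sets $X$ avoiding $e$. The forward direction is the substantive one: suppose $X$ is a multiway cut for $\cT$ in $G$ but not in $G/e$, so $G/e - X$ contains a path $P$ joining terminals from two different parts of $\cT$. Lifting $P$ back to $G$, the only new adjacency created by the contraction is that the two endpoints $u,v$ of $e$ become a single vertex; so $P$ lifts to a walk in $G - X$ that may use $u$ and $v$ interchangeably, hence to a genuine $u$--$v$-avoiding-$X$ connection together with, at worst, the edge $e$ itself. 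But $e \notin X$, so in fact $P$ lifts to a genuine walk in $G-X$ connecting the same two terminals, contradicting that $X$ is a multiway cut for $\cT$ in $G$. The converse direction is easier: contracting an edge can only merge connected components, so any path in $G/e - X$ between distinct parts corresponds to one in $G-X$ --- wait, that is the direction just handled; for the other direction, if $G - X$ has such a path then its image in $G/e - X$ (well-defined since $e \notin X$) is still such a path. So the equivalence holds.

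Next I would use this equivalence to argue that $G/e$ is a multicut-mimicking network for $(G,T)$. Fix a partition $\cT$ of $T$ and let $X$ be a minimum multiway cut for $\cT$ in $G$. Since $e$ is non-essential for $(G,T)$, hence non-essential for the cut-request set $R$ realized by $\cT$ (the set of pairs in distinct parts), there is a minimum multiway cut $X'$ for $\cT$ in $G$ with $e \notin X'$ and $|X'| = |X|$; by the equivalence just proved, $X'$ is a multiway cut for $\cT$ in $G/e$, so the minimum multiway cut size in $G/e$ is at most that in $G$. Conversely, any minimum multiway cut $Y$ for $\cT$ in $G/e$ does not contain the contracted vertex-pair as an ``edge'' --- there is no such edge --- so $Y \subseteq E(G/e) \subseteq E(G)$ avoids $e$, and by the equivalence $Y$ is a multiway cut for $\cT$ in $G$; hence the minimum in $G$ is at most the minimum in $G/e$. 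Equality follows, and by Proposition~\ref{prop:multi-equals-multiway} this suffices for $G/e$ to be a multicut-mimicking network for $(G,T)$.

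Finally, for the statement about multicut-covering sets: let $Z \subseteq E(G/e)$ be multicut-covering for $(G/e, T)$, and fix any $R \subseteq \binom{T}{2}$. By Proposition~\ref{prop:multi-equals-multiway} it is enough to handle partitions, so let $\cT$ be the partition realizing $R$ (or work directly with $R$ --- the argument is identical). There is a minimum multicut $X$ for $R$ in $G/e$ with $X \subseteq Z$; as above, $X$ avoids $e$ and is a multicut for $R$ in $G$, and by the size equivalence (minimum multicut sizes agree in $G$ and $G/e$, which follows from the multiway-cut version just established and Proposition~\ref{prop:multi-equals-multiway}) it is a \emph{minimum} multicut for $R$ in $G$. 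Thus $X \subseteq Z$ is a minimum multicut for $R$ in $G$, so $Z$ is multicut-covering for $(G,T)$. I expect the only genuinely delicate point to be the path-lifting argument in the first part --- specifically being careful that a path in $G/e - X$ lifts to a \emph{walk}, not necessarily a simple path, in $G - X$, and that this walk still certifies connectivity between the two terminals; everything after that is bookkeeping plus appeals to Proposition~\ref{prop:multi-equals-multiway} and the definition of non-essential.
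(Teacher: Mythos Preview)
Your proposal is correct and follows essentially the same approach as the paper's proof. The paper is more concise on the first part, simply observing that since $e \notin X$ the contraction of $e$ in $G-X$ does not change the structure of the connected components (which immediately gives both directions of the equivalence without an explicit path-lifting argument), but the substance of all three steps is the same.
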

\begin{proof}
  The first part is clear, since the contraction of an edge in $G-X$ does not change
  the structure of the connected components. Since $e$ is non-essential,
  by assumption there exists such an optimal $X$ with $e \notin X$
  for every partition $\cT$, hence $(G/e,T)$ is a multicut-mimicking
  network. It also follows that an optimal solution for $G$
  always exists in $E(G/e)$, hence a solution-covering set
  for $(G/e,T)$ is also solution-covering for $(G,T)$.
\end{proof}

The process now works as follows.  Recall that $(G,T)$ is
$(\alpha,c)$-dense if $\cpc_T(S) \geq |S|^{1/c}/\alpha$
for every set $S$ with $S \cap T \neq \emptyset$ and $|S| \leq |V|/2$.
The main technical result is a marking process that marks all
essential edges for $(G,T)$ on the condition that $(G,T)$ is
$(\alpha,c)$-dense, and which marks at most $k^c$ edges in total.
In such a case, we are clearly allowed to select and contract
any unmarked edge of $G$.
Now, assume that $(G,T)$ is not $(\alpha,c)$-dense.  Then by definition
there exists a set $S \subset V$ such that $\cpc_T(S) < |S|^{1/c}/\alpha$.
If we can detect a set $S$ such that $\cpc_T(S) < |S|^{1/c}$,
then we can recursively compute a multicut-covering set $Z_S$ for $(G_S, T(S))$,
consisting of at most $\cpc_T(S)^c < |S|$ edges.  By the above, we may
again select any single edge $e \in E(G_S) \setminus Z_S$ and contract $e$ in $G$.  
In either case, we replace $G$ by a strictly smaller graph
until $|E(G)| \leq k^c$, at which point we are done.

The two ingredients in the above are thus the marking process for
$(\alpha,c)$-dense graphs, which we present next, and the ability to
distinguish the two cases, which has been formalized in the notion of
a quasipolynomial expansion tester.

\subsection{The dense case}
\label{sec:dense}

Let us now focus on the marking procedure.  Let a terminal network $(G,T)$ 
with $\cpc_G(T)=k$ and an integer $c$ be given.
We show a process that marks essential edges, on the condition that
$(G,T)$ is $(\alpha,c)$-dense, where we initially assume that $\alpha$ is constant.
That is, we prove the following result. The proof takes up the rest of the subsection.

\begin{lemma} \label{lemma:main-dense}
  Assume that $(G,T)$ is $(\alpha,c)$-dense for some constant $\alpha$.
  There is a function $c=\Theta(\alpha \log k)$ and a randomized
  polynomial-time procedure that returns a set of edges $Z \subseteq E(G)$
  such that $|Z| \leq k^c$ and if $(G,T)$ is $(\alpha,c)$-dense 
  then every essential edge for $(G,T)$ is contained in $Z$.
\end{lemma}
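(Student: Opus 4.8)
The plan is to use a representative-sets argument based on linear matroid tools, following the template of the \textsc{$s$-Multiway Cut} kernel. First I would set up the right matroid. A minimum multiway cut $X$ for a partition $\cT$ has size at most $k=\cpc_G(T)$ (the isolating cuts already give such a bound), so every essential edge lies in some minimum multicut of size $\le k$. The key structural fact is that a minimum multicut $X$, together with the partition $\cT$ it induces, is controlled by at most $k$ "boundary" edges incident to the pieces; I would encode reachability from terminals after deleting $X$ by a gammoid / transversal-type matroid on $E(G)$ of rank $O(k)$, so that "$e$ is essential for $R$" translates into "$e$ lies in a minimum-weight basis of the relevant restricted matroid for every $R$". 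Concretely, for each edge $e$ one wants to certify that there is a family of cut requests $R$ and a witnessing structure of bounded size (bounded by $O(k)$) such that $e$ is forced. The representative-sets lemma (Lovász; Marx) then says: among all edges, a subfamily of size $\binom{O(k)}{O(k)} = k^{O(k)}$ suffices to preserve all such witnesses — but that bound is far too weak, so the denseness hypothesis must be used to sharpen the rank.

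The crucial use of $(\alpha,c)$-denseness is to bound the number of \emph{distinct relevant separators}, not just their individual size. If $(G,T)$ is $(\alpha,c)$-dense, then any set $S$ with $S\cap T\neq\emptyset$ and $|S|\le |V|/2$ satisfies $|S| \le (\alpha\,\cpc_T(S))^c$; equivalently, a piece cut off by a separator of "$T$-cost" $\beta$ has at most $(\alpha\beta)^c$ vertices. Since a minimum multicut of size $\le k$ partitions $T$ into pieces whose total $T$-capacity is $O(k)$, each piece is small, and the total number of vertices "reachable from terminals along short paths avoiding a bounded cut" is polynomially (in fact $k^{O(c)} = k^{O(\alpha\log k)}$) bounded. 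This is what lets me replace the naive rank $O(k)$ by an effective dimension of $O(c\log k) = O(\alpha \log^2 k)$ in the truncated matroid, so that the representative set has size $k^{O(c)} \le k^{O(\alpha\log k)}$. Choosing $c = \Theta(\alpha\log k)$ with a large enough constant makes the two occurrences of $c$ consistent (the denseness needed is "$(\alpha,c)$-dense" and the output size is $k^c$), which is exactly the statement.

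The steps, in order: (1) observe essential edges lie in minimum multicuts of size $\le k$, and reduce (via Proposition~\ref{prop:multi-equals-multiway}) to partitions $\cT$; (2) define the matroid $M$ on $E(G)$ (a gammoid built from flows out of $T$) together with a weight function so that minimum multiways for $\cT$ correspond to minimum-weight independent sets of a prescribed size, and "$e$ essential" becomes "$e$ in every min-weight max-independent set for some $\cT$"; (3) use denseness to argue that, after truncating $M$ to the part reachable from $T$ along cuts of cost $O(k)$, its rank is $O(c\log k)$; (4) apply the representative-sets computation (using the polynomial-time linear-algebra algorithm for representative sets over a large enough field, hence the randomization) to extract $Z$ with $|Z|\le k^c$ preserving all min-weight independent sets, so $Z$ contains every essential edge when $(G,T)$ is $(\alpha,c)$-dense; (5) set $c=\Theta(\alpha\log k)$ to close the loop.

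The main obstacle I expect is step (2)–(3): getting the matroid encoding to simultaneously (a) capture \emph{all} partitions $\cT$ with a single matroid and weight scheme — multiway cuts are more delicate than the single $(A,B)$-cuts of the cut-covering set construction, since the partition is not known in advance — and (b) have its effective rank controlled by the denseness parameter rather than by $k$. Making "reachable from $T$ along a bounded cut" into a clean rank bound, and ensuring the representative-set guarantee transfers through the (necessarily weighted, possibly truncated) matroid to the combinatorial statement about essential edges, is where the real work lies; everything else is bookkeeping and a parameter choice.
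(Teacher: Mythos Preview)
Your plan has the right overall shape --- representative sets over a gammoid-based matroid, with denseness controlling the size --- but step~(3) contains a genuine gap that breaks the argument. You write that ``each piece is small'' because the pieces of $G-X$ have total $T$-capacity $O(k)$. That is false in the relevant sense: the total $T$-capacity being $O(k)$ does not stop one piece from having $T$-capacity up to $3k$, and the denseness inequality $|S|\le(\alpha\,\cpc_T(S))^c$ only applies when $|S|\le|V|/2$. In the regime of interest we have $|V(G)|>k^c$, so the largest component of $G-X$ necessarily contains more than half the graph and is \emph{not} bounded by denseness at all. A second or third component could also have $T$-capacity close to $k$, giving a size bound of $(\alpha k)^c\gg k^c$. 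Consequently your proposed rank truncation to $O(c\log k)$ cannot work: there is no single matroid of rank $k^{o(c)}$ on $E(G)$ whose independence structure sees all of a component of size $\Theta(|V|)$.

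The paper's fix is exactly the observation you are missing. Ordering the components $V_1,\ldots,V_s$ by decreasing $\cpc_T(V_i)$, one has $\cpc_T(V_i)\le 3k/i$ (an averaging argument), so only the \emph{first} $i_0=\Theta(\alpha)$ components can be large; for $i\ge i_0$ denseness gives $|V_i|\le(3k/4)^c$, and summing yields $\sum_{i\ge i_0}|V_i|\le k^{c-i_0}$ for $c=\Theta(\alpha\log k)$. The matroid is then built in \emph{layers}: $i_0-1$ separate copies of the edge-cut gammoid $M(T)$ (one per large component, handled exactly as in the $s$-Multiway Cut kernel and valid regardless of component size), plus one truncated graphic matroid of rank $k^{c-i_0}$ to span all the small components at once, plus a uniform layer of rank $k$ for $X$ itself. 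The product-form representative-sets bound then gives $|Z|\le (2k)^{i_0-1}\cdot k^{c-i_0}\cdot k\le k^c$. There is no weight function and no single global rank truncation; the asymmetry between a constant number of large pieces and many small pieces is what makes the bound close.
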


In the preliminary version of this paper~\cite{selfICALP}, we gave a
multi-phase marking procedure for this purpose, with a relatively
complex correctness proof.  In this paper, we give a simplified proof,
based around the following observation.

\begin{proposition} \label{prop:eventually-small}
  Let $\cT$ be a partition of $T$ and $X$ a minimum multiway cut for $\cT$.
  Let $V=V_1 \cup \ldots \cup V_s$ be the partition of $V$
  according to the connected components of $G-X$,
  ordered so that $\cpc_T(V_1) \geq \ldots \geq \cpc_T(V_s)$.
  Then for any $i \in [s]$, $\cpc_T(V_i) \leq 3k/i$. 
\end{proposition}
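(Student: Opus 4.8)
The plan is to bound the total $T$-capacity $\sum_{j=1}^s \cpc_T(V_j)$ by $3k$, and then read off the per-part bound from the sorted order.

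First I would pin down the structure of $X$. By minimality of $X$, no edge of $X$ can have both endpoints in the same component $V_j$ of $G-X$ (such an edge could be deleted from $X$ and $X$ would still be a multiway cut for $\cT$), and conversely every edge of $G$ running between two distinct components must lie in $X$. Hence $X$ is exactly the set of edges of $G$ with endpoints in two different parts $V_a, V_b$. Consequently each edge of $X$ is counted once in $\partial_G(V_a)$ and once in $\partial_G(V_b)$, while every other edge contributes $0$, so $\sum_{j=1}^s \delta_G(V_j) = 2|X|$. Also, since the $V_j$ partition $V$ and hence $T$, we have $\sum_{j=1}^s \cpc_G(T \cap V_j) = \cpc_G(T) = k$. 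Adding these and using $\cpc_T(V_j) = \cpc_G(T \cap V_j) + \delta_G(V_j)$ gives $\sum_{j=1}^s \cpc_T(V_j) = k + 2|X|$.

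Next I would bound $|X| \le k$. The set of all edges of $G$ incident to $T$ has cardinality at most $\sum_{t \in T} d(t) = k$ and is a multiway cut for the finest partition $\{\{t\}: t\in T\}$ (it isolates every terminal), hence also a multiway cut for the coarser partition $\cT$. Since $X$ is a \emph{minimum} multiway cut for $\cT$, we get $|X| \le k$, and therefore $\sum_{j=1}^s \cpc_T(V_j) \le 3k$.

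Finally, using that the parts are ordered with $\cpc_T(V_1) \ge \cdots \ge \cpc_T(V_s)$, for any $i \in [s]$ we have $i\cdot \cpc_T(V_i) \le \sum_{j=1}^i \cpc_T(V_j) \le \sum_{j=1}^s \cpc_T(V_j) \le 3k$, i.e.\ $\cpc_T(V_i) \le 3k/i$, as claimed. No step here is a genuine obstacle; the only point that needs a moment's care is the identity $\sum_j \delta_G(V_j) = 2|X|$, which rests on the observation that a minimum multiway cut is exactly the set of inter-component edges.
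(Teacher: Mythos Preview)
Your proof is correct and follows essentially the same approach as the paper: bound $\sum_j \cpc_T(V_j)$ by $3k$ via $\sum_j \delta_G(V_j) \le 2|X| \le 2k$ plus the terminal contribution $k$, then use the sorted order. You are simply more explicit than the paper about why $|X|\le k$ and why $\sum_j \delta_G(V_j)=2|X|$ (the paper states $\sum_j \delta(V_j)\le 2k$ directly, leaving $|X|\le k$ implicit).
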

\begin{proof}
  Since every edge of $X$ is incident with at most two components of
  $G-X$, we have $\sum_{i=1}^s \delta(V_i) \leq 2k$.
  The additional contribution to $\cpc_T(V_i)$ from terminals of $T$
  is precisely $k$ in total. Hence $\sum_i \cpc_T(V_i)\leq 3k$. 
  On the other hand, if $\cpc_T(V_i)>3k/i$
  then $\sum_{j=1}^i \cpc_T(V_j) > i \cdot (3k/i)=3k$. 
\end{proof}

Since $(G,T)$ is by assumption $(\alpha,c)$-dense,
it follows that for $i>1$ we have $|V_i| \leq (\alpha 3k/i)^c$.  
If $|V(G)|>k^c$, and if $c$ is large
enough then it follows that almost all vertices of $G$ are found in
the first few components.  We shall see that this suffices to allow
for a simple marking procedure to capture all essential edges of $(G,T)$.

\subsubsection{Matroid constructions}

Before we show the marking procedure, we need some additional
preliminaries.  We refer to Oxley~\cite{OxleyBook2} for more
background on matroids, and to Marx~\cite{Marx09-matroid} for a more
concise, technical presentation, including the presentation of the
representative sets lemma.  For further examples of kernelization
usage of representative sets, see Kratsch and Wahlstr\"om~\cite{KratschW20JACM}.

A \emph{matroid} is a pair $M=(E, \cI)$
where $\cI \subseteq 2^E$ is the \emph{independent sets} of $M$,
subject to the following axioms.
\begin{enumerate}
\item $\emptyset \in \cI$;
\item if $B \in \cI$ and $A \subseteq B$ then $A \in \cI$; and
\item  if $A, B \in \cI$ with $|B|>|A|$ then there exists an element
  $x \in B \setminus A$ such that $A+x \in \cI$.
\end{enumerate}
A \emph{basis} of $M$ is a maximum independent set of $M$;
the \emph{rank} of $M$ is the size of a basis.

Let $A$ be a matrix, and let $E$ label the columns of $A$.
The \emph{column matroid} of $A$ is the matroid $M=(E, \cI)$ where
$S \in \cI$ for $S \subseteq E$ if and only if the 
columns indexed by $S$ are linearly independent.
A matrix $A$ \emph{represents} a matroid $M$ if $M$ is isomorphic to
the column matroid of $A$. We refer to $A$ as a
\emph{linear representation} of $M$. 

We need three classes of matroids to build from. First, for a set $E$,
the \emph{uniform matroid} over $E$ of rank $r$ is the matroid
\[
U(E,r) := (E,\{S \subseteq E \mid |S| \leq r\}).
\]
Uniform matroids are representable over any sufficiently large field.

The second class is a \emph{truncated graphic matroid}.  Given a graph
$G=(E,V)$, the \emph{graphic matroid} of $G$ is the matroid $M(G)=(E,\cI)$
where a set $F \subseteq E$ is independent if and only if $F$ is the
edge set of a forest in $G$. Graphic matroids can be deterministically
represented over all fields. The \emph{$r$-truncation} of a matroid $M=(E,\cI)$
for some $r \in \N$ is the matroid $M'=(E,\cI')$ where $S \in \cI'$ if and
only if $S \in \cI$ and $|S| \leq r$.  Given a linear representation of $M$,
over some field $\F$, a truncation of $M$ can be computed in
randomized polynomial time, possibly by moving to an extension field
of $\F$~\cite{Marx09-matroid}. There are also methods for doing this
deterministically~\cite{LokshtanovMPS18TALG}, but the basic randomized
form will suffice for us.

The final class is more involved.  Let $D=(V,A)$ be a directed graph
and $S \subseteq V$ a set of source vertices. A set $T \subseteq V$
is \emph{linked to $S$ in $D$} if there are $|T|$ pairwise vertex-disjoint
paths starting in $S$ and ending in $T$. Let $U \subseteq V$.
Then
\[
  M(D,S,U)=(U,\{T \subseteq U \mid T \text{ is linked to $S$ in $D$}\})
\]
defines a matroid over $U$, referred to as a \emph{gammoid}. 
Note that by Menger's theorem, a set $T$ is dependent in $M$ if and
only if there is an $(S,T)$-vertex cut in $D$ of cardinality less than
$|T|$ (where the cut is allowed to overlap $S$ and $T$). 
Like uniform matroids, gammoids are representable over any
sufficiently large field, and a representation can be computed in
randomized polynomial time~\cite{OxleyBook2,Marx09-matroid}.
We will work over a variant of gammoids we refer to as 
\emph{edge-cut gammoids}, which are defined as gammoids, except in
terms of edge cuts instead of vertex cuts.  Informally, for a graph $G=(V,E)$
and a set of source vertices $S \subseteq V$, the edge-cut gammoid of $(G,S)$
is a matroid on a ground set of edges, where a set $F$ of edges is independent
if and only if it can be linked to $S$ via pairwise edge-disjoint paths. 
However, we also need to introduce the ``edge version'' of
\emph{sink-only copies} of vertices, as used in previous
work~\cite{KratschW20JACM}.  That is, we introduce a second set
$E'=\{e' \mid e \in E\}$ containing copies of edges $e \in E$
which can only be used as the endpoints of linkages, not as initial or
intermediate edges. 

More formally, for a graph $G=(V,E)$ and a set of source vertices
$S \subseteq E$ we perform the following transformation.
\begin{enumerate}
\item Let $L(G)$ be the line graph of $G$, i.e., the vertices of
  $L(G)$ are $V(L(G))=\{z_e \mid e \in E(G)\}$,
  and $z_ez_f \in E(L(G))$ if and only if $e \cap f \neq \emptyset$.
  Let $S_D \subseteq V(L(G))$ be the vertices of $L(G)$ corresponding to
  the edges $E(S,V)$ in $G$.
\item Convert $L(G)$ to a directed graph $D_G$ by replacing every edge
  $z_ez_f \in E(L_G)$ by a pair of directed edges $(z_e,z_f)$, $(z_f,z_e)$ in $E(D_G)$.
\item Finally, for every vertex $z_e \in V(D_G)$ introduce a new
  vertex $z_e'$, and create a directed edge $(v,z_e')$ for every
  edge $(v,z_e)$ in $D_G$. 
\end{enumerate}
Slightly abusing notation, we let $E$ refer to the vertices $z_e$ in $D_G$,
and we let $E'$ refer to the vertices $z_e'$ in $D_G$. The
\emph{edge-cut gammoid of $(G,S)$} is the gammoid
$(D_G, S_D, E \cup E')$.
Let us observe the resulting notion of independence.

\begin{proposition}
  Let $G=(V,E)$ and $S \subseteq V$ be given.  Let $M=(E \cup E', \cI)$
  be the edge-cut gammoid of $(G,S)$.  Let $X \subseteq E \cup E'$ be given,
  and let $F=(X \cap E) \cup \{e \mid e' \in F \cap E'\}$.
  Then $X$ is independent in $M$ if and only if there exists a set $\cP$ of $|X|$
  paths linking $X$ to $S$, where paths are pairwise edge-disjoint
  except that if $\{e, e'\} \subseteq X$ for some edge $e$, then two distinct
  paths in $\cP$ end in $e$. 
\end{proposition}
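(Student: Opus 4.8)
The plan is to unpack this as what it is: a direct restatement of the definition of the edge-cut gammoid through the line-graph-plus-sink-copies construction that precedes it. I would prove the two directions of the equivalence by translating, back and forth, between pairwise vertex-disjoint directed paths in $D_G$ and the family $\cP$ of walks in $G$. Recall that, by the definition of a gammoid, $X\subseteq E\cup E'$ is independent in $M=(D_G,S_D,E\cup E')$ exactly when there are $|X|$ pairwise vertex-disjoint directed paths in $D_G$, each running from a vertex of $S_D$ to a distinct vertex of $X$; and the notion of ``a walk in $G$ linking a copy of an edge $e$ to $S$'' is the one read off from the construction, tracked by a $D_G$-path that terminates at $z_e$ or at its sink copy $z_e'$ according to which copy of $e$ is being covered. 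Since a walk in $G$ can always be shortened to a simple path, I will state $\cP$ in terms of paths. The crucial structural point, which the whole proof turns on, is that $z_e$ and $z_e'$ are two distinct vertices of $D_G$: this is precisely what allows two walks of $\cP$ to ``end in $e$'' simultaneously when both copies of $e$ lie in $X$, and is the reason for the exception in the statement.

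For the direction ``$\cP$ exists $\Rightarrow$ $X$ is independent'', I would take each walk $P_x\in\cP$, read off the sequence of line-graph vertices $z_{(\cdot)}$ that its edges trace out, and terminate this sequence at $z_e$ or $z_e'$ according to which copy $x$ is. The construction of $D_G$ --- bidirected on the unprimed vertices, with each $z_e'$ a sink that inherits the in-neighbours of $z_e$ --- makes each of these sequences a legal directed path from $S_D$ to the vertex $x\in X$. The only non-routine point is that the resulting $|X|$ paths are pairwise vertex-disjoint, and this is exactly where the exception clause is used: a common vertex of two of them would have to be an unprimed $z_g$ with $g$ an edge lying on two of the walks $P_x$, which by the hypothesis on $\cP$ can only happen when $g$ is the exceptional edge (so $\{g,g'\}\subseteq X$) and the two walks are the ones covering $g$ and $g'$; but the $D_G$-path coming from the walk that covers the primed copy is terminated at $z_g'$, which keeps it off $z_g$, so the overlap cannot actually occur.

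For the converse I would argue in the opposite direction: from pairwise vertex-disjoint $D_G$-paths, first perform the standard non-redundancy reduction (take a family of minimum total length, so that each primed vertex is used only as a path endpoint and no path passes through a vertex of $X$ before reaching its own endpoint), then translate each $D_G$-path into a sequence of edges of $G$, orient that sequence into an honest walk rooted at a vertex of $S$ via a short induction on its length, and shorten it to a simple path. The edge-disjointness of the resulting family $\cP$, apart from the stated exception, is then read off from the vertex-disjointness of the $D_G$-paths together with the structure of the sink copies: the only way an edge of $G$ can end up in two members of $\cP$ is through its two $D_G$-vertices $z_g$ and $z_g'$ lying on two different paths, and the non-redundancy reduction forces this to be the exceptional configuration. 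I expect the only genuine work here to be this bookkeeping --- checking that pairwise vertex-disjointness in $D_G$ corresponds precisely to edge-disjointness in $G$ up to the one permitted overlap per exceptional edge; the orientation and shortening steps are routine and entirely analogous to the vertex-cut construction already carried out in~\cite{KratschW20JACM}.
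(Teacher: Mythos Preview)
The paper does not give a proof of this proposition at all; it is stated as an observation (``Let us observe the resulting notion of independence'') immediately following the construction of $D_G$, and the text moves on directly afterwards. Your proposal is the natural unpacking of the definition and is correct: the back-and-forth translation between vertex-disjoint directed paths in $D_G$ and edge-disjoint walks in $G$ (with the sink-copy mechanism handling the exceptional overlap) is exactly what the author is implicitly relying on, and your bookkeeping about why the $z_e$/$z_e'$ split realises the stated exception is the right point to flag.
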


We let $U(E,p)$ denote the
uniform matroid of rank $p$ on ground set $E(G)$, $M_G(p)$ the
$p$-truncated graphic matroid of $G$, and $M(T)$ the edge-cut gammoid of $(G,T)$.

If $M_1=(E_1,\cI_1)$ and $M_2=(E_2,\cI_2)$ are two matroids
with $E_1 \cap E_2 = \emptyset$, then their
\emph{disjoint union} is the matroid 
\[
M_1 \uplus M_2 = (E_1 \cup E_2, \{I_1 \cup I_2 \mid I_1 \in \cI_1, I_2 \in \cI_2\}).
\]
If $M_1$ and $M_2$ are represented by matrices $A_1$ and $A_2$ over
the same field, then $M_1 \uplus M_2$ is represented by
the matrix
\[
A=
\begin{pmatrix}
  A_1 & 0 \\
  0   & A_2 \\
\end{pmatrix}
\]
We will define matroids $M$ as the disjoint union over several copies
of the base matroids $M(T)$, $M_G(p)$ and $U(E,p)$ defined above.  In such a
case, we refer to the individual base matroids making up $M$ as the
\emph{layers} of $M$. 

\paragraph{Representative sets.}
Our main technical tool is the representative sets lemma,
due to Lov\'asz~\cite{Lovasz1977} and Marx~\cite{Marx09-matroid}.
This result has been important in FPT algorithms~\cite{Marx09-matroid,FominLPS16JACM}
and has been central to the previous kernelization algorithms for cut
problems, including variants of \textsc{Multiway Cut}~\cite{KratschW20JACM}.
We also introduce some further notions. 

\begin{definition}
  Let $M=(E,\cI)$ be a matroid and $X, Y \in \cI$.
  We say that \emph{$Y$ extends $X$ in $M$} if
  $r(X \cup Y)=|X|+|Y|$, or equivalently,
  if $X \cap Y= \emptyset$ and $X \cup Y \in \cI$.
  Furthermore, let $c=O(1)$ be a constant
  and let $\cY \subseteq \binom{E}{c}$.
  We say that a set $\hat \cY \subseteq \cY$
  \emph{represents $\cY$ in $M$} if the following holds:
  For every $X \in \cI$ for which there exists some $Y \in \cY$
  such that $Y$ extends $X$ in $M$, then there exists some $Y' \in \hat \cY$
  such that $Y'$ extends $X$ in $M$. 
\end{definition}

The representative sets lemma now says the following.

\begin{lemma}[representative sets lemma~\cite{Lovasz1977,Marx09-matroid}]
  \label{lemma:repset}
  Let $M=(E, \cI)$ be a linear matroid represented by a matrix $A$ of rank $r+s$, 
  and let $\cY \subseteq \binom{E}{s}$ be a collection of independent sets of $M$,
  where $s=O(1)$. In time polynomial in the size of $A$ and the size of $\cY$, 
  we can compute a set $\hat \cY\subseteq \cY$ of size at most $\binom{r+s}{s}$ 
  which represents $\cY$ in $M$. 
\end{lemma}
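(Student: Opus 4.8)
The plan is to give the standard exterior-algebra proof of the representative sets lemma, which is deterministic and has the required running time. First I would normalise the representation: since $A$ has rank $r+s$, I may delete rows to assume $A$ is an $(r+s)\times|E|$ matrix of full row rank, and I write $a_e\in\F^{r+s}$ for the column indexed by $e$. Fix an arbitrary linear order on $E$, and for an independent set $S=\{s_1<\dots<s_j\}$ let $\vec a_S:=a_{s_1}\wedge\dots\wedge a_{s_j}$ be the corresponding element of the exterior power $\Lambda^{j}(\F^{r+s})$ (well defined up to a sign, which will be immaterial); since the columns of an independent set are linearly independent, $\vec a_S\neq 0$. The one algebraic fact I need is that a wedge of $m$ vectors in $\F^{r+s}$ is nonzero exactly when those vectors are linearly independent. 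Applied to $\vec a_X\wedge\vec a_Y\in\Lambda^{|X|+s}(\F^{r+s})$ for $X\in\cI$ and $Y\in\cY$, this says: $\vec a_X\wedge\vec a_Y\neq0$ if and only if the columns $\{a_x:x\in X\}\cup\{a_y:y\in Y\}$ are linearly independent, i.e.\ if and only if $X\cap Y=\emptyset$ and $X\cup Y\in\cI$ --- that is, exactly when $Y$ extends $X$ in $M$.

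Next I would recast the whole problem inside $W:=\Lambda^{s}(\F^{r+s})$, a vector space of dimension $\binom{r+s}{s}$. Consider the family of vectors $\{\vec a_Y:Y\in\cY\}\subseteq W$, and let $\hat\cY\subseteq\cY$ index a maximal linearly independent subfamily of these vectors. Then $|\hat\cY|\le\dim W=\binom{r+s}{s}$, and by maximality $\operatorname{span}\{\vec a_{Y'}:Y'\in\hat\cY\}=\operatorname{span}\{\vec a_{Y}:Y\in\cY\}=:U$. To see that $\hat\cY$ represents $\cY$, fix any $X\in\cI$ such that some $Y\in\cY$ extends $X$. The map $\psi_X\colon W\to\Lambda^{|X|+s}(\F^{r+s})$ defined by $\psi_X(v)=\vec a_X\wedge v$ is linear, and by the previous paragraph $\psi_X(\vec a_Y)\neq0$; hence $\psi_X$ does not vanish on all of $U$. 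Since $\hat\cY$ spans $U$, there is some $Y'\in\hat\cY$ with $\psi_X(\vec a_{Y'})\neq0$, and by the previous paragraph $Y'$ extends $X$. As $\hat\cY\subseteq\cY$, this is precisely the defining property of a representative set.

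Finally I would verify the running time. Each coordinate of $\vec a_Y$ in the standard basis of $W$ is an $s\times s$ minor of $A$; with $s=O(1)$ there are $\binom{r+s}{s}=O((r+s)^s)$ such coordinates and each is computable by brute force, so the matrix whose columns are the $|\cY|$ vectors $\vec a_Y$ can be written down explicitly over $\F$ in polynomial time, and a maximal linearly independent set of its columns --- hence $\hat\cY$ --- is extracted by Gaussian elimination. All arithmetic stays within $\F$, so no field extension is needed and the whole procedure is deterministic.

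The crux is the translation carried out in the first paragraph: recognising that ``$Y$ extends $X$'' is equivalent to the non-vanishing of a wedge product, so that preserving the linear span of the family $\{\vec a_Y\}$ automatically preserves extendability against every $X$ at once. After that the argument is routine exterior algebra and Gaussian elimination; the only bookkeeping point worth care is that $X$ ranges over all independent sets rather than only those of size $r$, which is harmless since $\psi_X$ is linear for every $X$ and a $Y\in\cY$ can in any case extend only sets of size at most $r$.
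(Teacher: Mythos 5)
The paper states this lemma as a known result (citing Lov\'asz and Marx) and offers no proof of its own, so there is nothing internal to compare against; your exterior-algebra argument is precisely the standard proof from those references, and it is correct as written --- the key translation that ``$Y$ extends $X$'' is equivalent to $\vec a_X\wedge \vec a_Y\neq 0$, the bound $|\hat\cY|\leq\dim\Lambda^s(\F^{r+s})=\binom{r+s}{s}$ from taking a column basis, and the polynomial running time for constant $s$ are all handled properly. The only remark worth making is that your observation about deterministic computation over $\F$ is consistent with the paper: the randomization in its pipeline comes from constructing linear representations of gammoids and truncations, not from the representative-set computation itself.
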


We use the following \emph{product form} of the representative
sets lemma, with stronger specialized bounds.
Assume that the rank of $M$ is $r=r_1+\ldots+r_c$, where $r_i$
is the rank of layer $i$ of $M$.
Then Lemma~\ref{lemma:repset} gives a bound on $|\hat \cY|$ as 
$\Theta((r_1+\ldots+r_c)^c)$, but the following bound is significantly
better when the layers of $M$ have different rank. 

\begin{lemma}[{\cite[Lemma~3.4]{KratschW20JACM}}]
  \label{lemma:repset-product}
  Let $M=(E,\cI)$ be a linear matroid, given as the disjoint union of
  $c$ matroids $M_i=(E_i,\cI_i)$, where $M_i$ has rank $r_i$.
  Let $\cY \subseteq \binom{E}{c}$ be such that every set $Y \in \cY$
  contains precisely one member in each layer $M_i$ of $M$.
  Then the representative set $\hat \cY \subseteq \cY$ computed by the
  representative sets lemma will have  $|\hat \cY| \leq \prod_{i=1}^c r_i$.
\end{lemma}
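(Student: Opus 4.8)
The plan is to unwind the exterior-algebra argument underlying Lemma~\ref{lemma:repset} and to observe that, when $\cY$ has exactly one element in each layer, all the wedge vectors it manipulates lie in a coordinate subspace of dimension only $\prod_{i=1}^c r_i$. First I would fix the representation: since $M = M_1 \uplus \cdots \uplus M_c$, I take the block-diagonal matrix $A$ assembled from representations $A_i$ of the $M_i$ over a common field $\F$, as in the discussion of disjoint unions above; and, as harmless preprocessing, I first replace each $A_i$ by a full-row-rank submatrix, so that $A_i$ has exactly $r_i$ rows. Then $A$ has $r := r_1 + \cdots + r_c$ rows, its row set splits into blocks $B_1, \dots, B_c$ of sizes $r_1, \dots, r_c$, and the column $v_e \in \F^r$ of an element $e \in E_i$ is supported entirely on block $B_i$; write $\F^r = \F^{r_1} \oplus \cdots \oplus \F^{r_c}$ accordingly.

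Next, recall the mechanism behind Lemma~\ref{lemma:repset} (the Lov\'asz--Marx construction): to each independent $Y = \{y_1, \dots, y_c\} \in \cY$ with $y_i \in E_i$ one associates the vector $v_Y := v_{y_1} \wedge \cdots \wedge v_{y_c}$ in the exterior power $\bigwedge^{c}\F^r$, and the procedure returns $\hat\cY \subseteq \cY$ such that $\{v_Y : Y \in \hat\cY\}$ is a basis of $\operatorname{span}\{v_Y : Y \in \cY\}$. That this $\hat\cY$ is representative is exactly the conclusion of Lemma~\ref{lemma:repset} and requires nothing new; so it only remains to bound $|\hat\cY| = \dim\operatorname{span}\{v_Y : Y \in \cY\}$. (If some $Y \in \cY$ is dependent in $M$ then $v_Y = 0$ and $Y$ extends no $X$, so we may assume every $Y \in \cY$ is independent.)

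The key point is the block grading of the exterior power:
\[
  \bigwedge\nolimits^{c}\F^r \;=\; \bigoplus_{\substack{i_1 + \cdots + i_c = c \\ i_1, \dots, i_c \ge 0}} \; \bigwedge\nolimits^{i_1}\F^{r_1} \otimes \cdots \otimes \bigwedge\nolimits^{i_c}\F^{r_c}.
\]
Since the factor $v_{y_j}$ of $v_Y$ lies in the summand $\F^{r_j}$, the wedge $v_Y$ contributes exactly one vector from each block and hence lies in the single component with $i_1 = \cdots = i_c = 1$, namely $W := \F^{r_1} \otimes \cdots \otimes \F^{r_c}$, which has dimension $\prod_{j=1}^c r_j$. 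Equivalently, in the standard basis $\{\mathbf e_S : S \subseteq [r],\ |S| = c\}$ of $\bigwedge^{c}\F^r$, each $v_Y$ is supported only on those $S$ meeting every block $B_j$ in exactly one index, of which there are $\prod_j r_j$. Therefore $\operatorname{span}\{v_Y : Y \in \cY\} \subseteq W$, so $|\hat\cY| = \dim\operatorname{span}\{v_Y : Y \in \cY\} \le \prod_{j=1}^c r_j$, as claimed.

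I do not expect a real obstacle: modulo the reduction to $r_i$-row blocks and the bookkeeping of the exterior-algebra grading, the statement is a dimension count riding on the already-established Lemma~\ref{lemma:repset}. The two points that need a little care are (i) confirming that the construction in Lemma~\ref{lemma:repset} indeed outputs a subfamily indexing a basis of $\operatorname{span}\{v_Y\}$, so that its size is governed by $\dim W$ rather than by the a priori bound $\binom{r}{c}$; and (ii) disposing of degenerate members of $\cY$ (loops, dependent sets). Both are routine.
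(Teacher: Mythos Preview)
The paper does not prove this lemma; it is quoted from~\cite{KratschW20JACM}. Your argument is correct and is exactly the one given in the cited reference: the Lov\'asz--Marx procedure outputs a subfamily indexing a basis of $\operatorname{span}\{v_Y : Y \in \cY\}$, and since each $Y$ contributes one vector per block, every $v_Y$ lies in the graded component $\F^{r_1} \otimes \cdots \otimes \F^{r_c}$ of dimension $\prod_{i=1}^c r_i$.
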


\subsubsection{The marking step}

For the marking process, fix an integer $i_0$ (to be specified later).
We define a matroid $M$ as a function of $i_0$ as the disjoint union
of $i_0-1$ copies of the edge-cut gammoid $M(T)$ on disjoint copies
of the ground set, one copy of $M_G(k^{c-i_0})$, and one copy of $U(E,k)$.
That is,
$M=M_1 \uplus \cdots \uplus M_{i_0+1}$, where $M_1$ through $M_{i_0-1}$ are copies
of $M(T)$ on disjoint copies of the ground set, $M_{i_0}=M_G(k^{c-i_0})$, and $M_{i_0+1}=U(E,k)$.
We refer to the first $i_0-1$ layers in $M$ as the \emph{gammoid layers}
and the last two as the \emph{graphic matroid layer} and the \emph{uniform layer}. 
Note that a linear representation of $M$ over some common field $\F$
can be computed in randomized polynomial time, since every layer of
$M$ can be represented over any sufficiently large field.

For each edge $e \in E$, let $t(e)$ be the set that contains
a copy of $z_e'$ in every gammoid layer, and a copy of $e$ in the
graphic matroid and uniform layers. Let
\[
F = \{t(e) \mid e \in E\}.
\]
We compute a representative set $\hat F \subseteq F$ in the matroid $M$,
and let  $Z \subseteq E$ be the set of edges represented in $\hat F$.
An edge $e \in E$ is \emph{marked} if $e \in Z$.
We finish the description by observing the bound on the number of
marked edges. 

\begin{lemma}
  The total number of marked edges is at most $k^c$.
\end{lemma}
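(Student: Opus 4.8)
The plan is to read the bound directly off the product form of the representative sets lemma (Lemma~\ref{lemma:repset-product}). First I would observe that the map $e \mapsto t(e)$ is injective, and that $Z$ is exactly $\{e \in E \mid t(e) \in \hat F\}$; hence $|Z| = |\hat F|$, and it suffices to bound $|\hat F|$.

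Next I would compute the rank $r_j$ of each of the $i_0+1$ layers of $M$. For each of the $i_0-1$ gammoid layers, which are copies of the edge-cut gammoid $M(T)$, the rank is at most the size of the source set $S_D$, i.e., the number of edges of $G$ incident to a terminal; since every such edge is counted at least once in $\sum_{t \in T} d(t) = \cpc_G(T) = k$, this rank is at most $k$. The truncated graphic matroid layer $M_{i_0} = M_G(k^{c-i_0})$ has rank at most $k^{c-i_0}$ by definition of the truncation, and the uniform layer $M_{i_0+1} = U(E,k)$ has rank $k$.

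Finally, since by construction every set $t(e) \in F$ contains precisely one element in each of the $i_0+1$ layers of $M$, Lemma~\ref{lemma:repset-product} gives
\[
  |\hat F| \;\le\; \Big(\prod_{j=1}^{i_0-1} k\Big)\cdot k^{c-i_0}\cdot k \;=\; k^{(i_0-1)+(c-i_0)+1} \;=\; k^c,
\]
and therefore $|Z| = |\hat F| \le k^c$. The only step that needs a moment's care is the rank bound for the gammoid layers — verifying that the number of edges of $G$ incident to $T$ is at most $\cpc_G(T)$, which holds because each such edge contributes $1$ or $2$ to the degree sum $\sum_{t\in T} d(t)$. Everything else is a direct substitution into the product bound, so I do not anticipate a real obstacle.
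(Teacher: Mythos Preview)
Your proposal is correct and is exactly the approach the paper takes: the paper's proof is the single sentence ``Follows directly from the product form of the representative sets lemma,'' and you have simply spelled out the layer-by-layer rank computation that makes this direct. Your care with the gammoid rank bound (via $|S_D|\le\cpc_G(T)=k$) and the injectivity of $e\mapsto t(e)$ are the right details to check, and the arithmetic $k^{i_0-1}\cdot k^{c-i_0}\cdot k=k^c$ is precisely what the product bound yields.
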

\begin{proof}
  Follows directly from the product form of the representative sets lemma.
\end{proof}

Finally, we note the correctness condition for the marking.
Consider a partition $\cT$ of $T$ and a corresponding minimum multiway
cut $X \subseteq E$. Note that $|X| \leq k$ since $E(T,V)$ is a
multiway cut for every partition, and say that $X$ is \emph{covered}
if all edges essential for $\cT$ are marked.  We then have the following. 

\begin{lemma}
  \label{lemma:updated-criterion}
  Let $V=V_1 \cup \ldots \cup V_s$ be the partition of $G-X$ into
  connected components, where $|V_1| \geq \ldots \geq |V_s|$. 
  If $|\bigcup_{i=i_0}^s V_i| \leq k^{c-i_0}$,
  then $X$ is covered.
\end{lemma}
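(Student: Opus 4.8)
The plan is to apply the representative-sets machinery. It suffices to show that every edge $e$ that is essential for $\cT$ lies in $Z$. For this I will produce an independent set $W_e$ of $M$ such that $t(e)$ extends $W_e$ in $M$, while $t(f)$ extends $W_e$ for no $f\neq e$; then, since $t(e)\in F$ and $\hat F$ represents $F$ in $M$, necessarily $t(e)\in\hat F$, i.e.\ $e\in Z$, so that $X$ is covered. I will use the following structure of $X$: writing $e=uv$, the endpoints lie in distinct components $V_a\ni u$, $V_b\ni v$ of $G-X$ (an $X$-edge with both endpoints in one component could be removed, contradicting minimality of $X$); each $V_i$ carries terminals of exactly one part of $\cT$ (a terminal-free component could be absorbed into a neighbour, and two parts in a component contradict that $X$ is a multiway cut), so $V_a$ and $V_b$ carry terminals of distinct parts. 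Since every edge of $G$ is either interior to some $V_i$ or lies in $X=\bigcup_{i<j}E_G(V_i,V_j)$, the task for $W_e$ is to be extended by $t(e)$ but not by $t(f)$ for any $X$-edge $f\neq e$ nor any edge $f$ interior to some $V_i$.

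I build $W_e$ inside the three groups of layers. In the uniform layer $U(E,k)$ I put $X\setminus\{e\}$, of size at most $k-1$; then the uniform coordinate of $t(f)$ extends $W_e$ iff $f\notin X\setminus\{e\}$, which already rules out every $X$-edge except $e$ (in particular all of $\partial(V_i)$ for every $i$), and leaves room for $e$ since $|X|\le k$. In the truncated graphic layer $M_G(k^{c-i_0})$ I put a union of spanning forests of the components $G[V_i]$ with $i\ge i_0$; by the hypothesis $|\bigcup_{i\ge i_0}V_i|\le k^{c-i_0}$ this has fewer than $k^{c-i_0}$ edges, hence is independent in the truncation and leaves room for $e$ (whose endpoints, if they lie in the tail at all, lie in distinct tail components, hence in distinct forest components), and its coordinate is not extended by $t(f)$ exactly when $f$ is interior to a tail component. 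The remaining edges to kill are those interior to the large components $V_1,\ldots,V_{i_0-1}$ — too many for the graphic layer — and this is what the $i_0-1$ gammoid layers $M_1,\ldots,M_{i_0-1}$ are for: in the layer attached to $V_i$ I put a maximal $M(T)$-independent set of edge-vertices drawn from the region $N[V_i]$ and avoiding $e$, designed so that its closure in $M(T)$ contains the edge-vertex of every edge interior to $V_i$ while still admitting the sink-only copy $z_e'$ carried by $t(e)$. By Proposition~\ref{prop:eventually-small}, $\cpc_T(V_i)\le 3k/i$, which together with the trivial bound $k$ on the rank of $M(T)$ keeps each layer within the rank budget, and the product form of the representative sets lemma then gives $|Z|\le k^{i_0-1}\cdot k^{c-i_0}\cdot k=k^c$, as already recorded.

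Granting that each layer of $W_e$ behaves as described, the conclusion is immediate: if $t(f)$ extends $W_e$, then $f$ is not an $X$-edge other than $e$ (uniform layer), not interior to a tail component (graphic layer), and not interior to a large component (gammoid layers); since these exhaust $E(G)\setminus\{e\}$, we get $f=e$. The main obstacle is everything inside the gammoid layers: choosing $W_e$ in each $M_i$ so that — simultaneously and consistently across all $i_0-1$ copies — it is independent and within budget, its closure contains precisely the edge-vertices of the edges interior to $V_i$, and $z_e'$ still extends it. This is the delicate Menger/min-cut bookkeeping characteristic of matroid-based kernelization, and it is here that the minimality of $X$ and the essentiality of $e$ genuinely enter — the latter to rule out a ``shortcut'' edge interior to $V_a$ or $V_b$ replacing $e$ — whereas the other two layers, the size estimates, and the final deduction are routine by comparison. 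An alternative organisation of the same argument only asks the gammoid layers to force $(X\setminus\{e\})\cup\{f\}$ to be a multiway cut for $\cT$; being then of size at most $|X|$, hence a minimum multiway cut, and $e$ being essential, this again yields $f=e$.
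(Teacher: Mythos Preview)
Your outline matches the paper's proof in its high-level structure: the uniform layer carries $X\setminus\{e\}$, the truncated graphic layer carries a spanning forest of the tail components $V_{i_0},\ldots,V_s$ (fitting under the truncation by the hypothesis), and gammoid layer $i$ is responsible for killing edges interior to $V_i$ for $i<i_0$. You have also correctly located where the real content lies.

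The gap is that you leave exactly that content unresolved (``Granting that each layer of $W_e$ behaves as described\ldots''), and your tentative filler --- a maximal independent set ``drawn from the region $N[V_i]$ and avoiding $e$'' --- is not the paper's choice and does not obviously work: if the set in layer $i$ is confined to edges near $V_i$, there is no reason the sink-only copy $z_e'$ should extend it when $e$ is far from $V_i$. The paper instead places in gammoid layer $i$ the edge-vertices $z_f$ for $f\in \partial(T_i)\cup X$, where $T_i=T\cap V_i$; note that \emph{all} of $X$ goes into every gammoid layer, and $e$ is not avoided (the tuple $t(e)$ carries the sink-only copy $z_e'$, so $z_e$ may sit in $I$). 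With this choice, any $f$ interior to $V_i$ is separated from $T$ by $\partial(T_i)\cup X$ in the edge sense (a path from $f$ to a terminal either ends in $T_i$, hence uses an edge of $\partial(T_i)$, or leaves $V_i$, hence uses an edge of $X$), so $z_f'$ fails to extend in layer $i$. For the positive direction the paper invokes the closest-min-cut characterisation of sink-only extensions from~\cite{KratschW20JACM}: saturating $\partial(T_i)$ effectively moves the source to $T\setminus T_i$, and if $z_e'$ failed to extend $X$ there would be a min-cut between $T\setminus T_i$ and $X$, closer to $T\setminus T_i$, not containing $e$; this yields a multiway cut for $\cT$ of size at most $|X|$ avoiding $e$, contradicting essentiality. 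That single step is what your sketch omits; once it is supplied, the rest of your argument is correct and coincides with the paper's.
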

\begin{proof}
  Let $e \in X$ be an edge which is essential for $\cT$.
  Let $\cT=\{T_1,\ldots,T_s\}$ where $T_i=T \cap V_i$ for $i \in [s]$.
  Finally, define an independent set $I$ in $M$ as follows.  
  In the $i$:th gammoid layer, $i < i_0$, $I$ contains copies of
  vertices $z_e$ from the edges of $\partial(T_i) \cup X$. 
  In the graphic matroid layer, $I$ contains a spanning forest 
  for components $V_{i_0}$ through $V_s$. In the final layer,
  $I$ contains the edges of $X-e$. We claim that $t(f)$
  extends $I$ if and only if $f=e$. 

  For the easier direction, we note that $t(f)$ cannot extend $I$ if $e \neq f$. 
  If $f \in E(V_i)$ for some $i < i_0$, then $f$ fails to extend $I$ in layer $i$.
  If $f \in E(V_i)$ for $i \geq i_0$, then $f$ fails to extend $I$ in the graphic 
  matroid layer. Finally, if $f \in X$ then $f$ fails to extend $I$ in the 
  uniform matroid layer. Hence it remains to show that $t(e)$ extends $I$. 

  For the gammoid layers, this works precisely as in~\cite{KratschW20JACM}.
  As noted in~\cite{KratschW20JACM} (Prop.~1), whether a sink-only copy $v'$
  extends a set $U$ in a gammoid $(D,S)$ depends on whether the original copy $v$ is
  contained in the $(S,U)$-min cut \emph{closest to $S$}.
  Here, including $\partial(T_i)$ in $I$ in layer $i$ effectively
  turns this condition into a cut between $X$ and $\delta(T \setminus T_i)$.
  Hence if $e'$ does not extend $X \cup \partial(T_i)$, then
  there is a min-cut $X_2$ between $X$ and $T \setminus T_i$
  that is closer to $T \setminus T_i$ than $X$, 
  and $e \notin X_2$.  This contradicts that $e$ is essential for $\cT$.

  For the last two layers, the statement is trivial.
  Hence $t(f)$ extends $I$ if and only if $f=e$, as promised, and $e \in Z$.
\end{proof}

\subsubsection{Correctness}

We finish this section by showing that a choice of $c=\Theta(\alpha \log k)$
suffices, when $\alpha$ is constant.  

\begin{lemma} \label{lemma:existence-size}
  Set $i_0=4\alpha$.
  There is a value $c=\Theta(\alpha \log k)$ such that the following holds:
  If $(G,T)$ is $(\alpha,c)$-dense,  then $Z$ contains all essential edges.
\end{lemma}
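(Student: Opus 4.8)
The plan is to combine Proposition~\ref{prop:eventually-small} with the $(\alpha,c)$-denseness hypothesis to show that, after the first $i_0-1$ components, only a polynomially (in $k$, with exponent $c-i_0$) bounded number of vertices remain, which is exactly the hypothesis needed to invoke Lemma~\ref{lemma:updated-criterion}. First, I would fix a partition $\cT$ of $T$ and a minimum multiway cut $X$ with component partition $V=V_1\cup\ldots\cup V_s$ ordered by $\cpc_T$ (equivalently by size up to the denseness bound). By Proposition~\ref{prop:eventually-small}, $\cpc_T(V_{i_0})\le 3k/i_0$, and since every later component is no larger, $(\alpha,c)$-denseness gives $|V_i|\le(\alpha\cdot 3k/i_0)^c$ for each $i\ge i_0$ (using that these components all have $|V_i|\le|V|/2$, which holds as soon as there are at least two of them of comparable size — the boundary cases where one component dominates need a separate, easy argument, see below). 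Setting $i_0=4\alpha$ makes $\alpha\cdot 3k/i_0 = 3k/4 < k$, so each such component has at most $(3k/4)^c$ vertices.

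The key quantitative step is then to bound $|\bigcup_{i=i_0}^s V_i|$. Crudely, $s\le 2k$ (each of the $\le k$ edges of $X$ borders at most two components, plus isolated terminal components, all bounded by $k$), so $|\bigcup_{i\ge i_0}V_i|\le 2k\cdot(3k/4)^c$. I want this to be at most $k^{c-i_0}$. Comparing exponents, I need roughly $(3/4)^c\cdot k^{c+1}\cdot 2 \le k^{c-i_0}$, i.e. $(4/3)^c \ge 2k^{i_0+1}$, which holds for $c = \Theta(i_0\log k)=\Theta(\alpha\log k)$ since $\log(4/3)$ is a positive constant. One has to be a little more careful and sum the geometric-type series $\sum_{i\ge i_0}(\alpha\cdot 3k/i)^c$ rather than using the uniform bound $2k\cdot(3k/4)^c$, but this only helps: the sum is dominated by its first term $(3k/4)^c$ up to a constant factor (since $(3k/i)^c$ decays geometrically in $i$ for $c$ large), so the same choice of $c$ works. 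Once $|\bigcup_{i=i_0}^s V_i|\le k^{c-i_0}$ is established, Lemma~\ref{lemma:updated-criterion} says $X$ is covered, i.e. every edge essential for $\cT$ is marked; ranging over all partitions $\cT$, every essential edge for $(G,T)$ lies in $Z$, which is the claim.

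The main obstacle I anticipate is handling the corner cases in applying denseness: Proposition~\ref{prop:eventually-small} bounds $\cpc_T(V_i)$ but denseness is only stated for sets with $S\cap T\neq\emptyset$ (or, in the ambient definition, $0<|S|\le|V|/2$ and $N[S]\neq V$). If some component $V_i$ with $i\ge i_0$ contains no terminals then its $T$-capacity is just $\delta(V_i)$, and it could in principle be large while still being a legitimate component; but a minimum multiway cut never leaves a terminal-free component of nonzero boundary — contracting it would only decrease $|X|$ — so in fact every component either contains a terminal or is a singleton isolated vertex, and the latter contribute nothing to the vertex count beyond $O(k)$. Similarly, if one component $V_1$ has size $>|V|/2$, denseness does not directly bound it, but we do not need to: we only bound $V_{i_0},\ldots,V_s$, all of which are at most half the remaining vertices once $V_1$ is removed, or one can apply denseness to their complements. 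I would spell out this reduction — that WLOG every $V_i$ with $i\ge 2$ has $V_i\cap T\neq\emptyset$ or $|V_i|=1$, and $|V_i|\le|V|/2$ — carefully at the start, after which the geometric-series estimate and the choice $c=\Theta(\alpha\log k)$ are routine.
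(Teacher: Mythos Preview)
Your proposal is correct and follows essentially the same route as the paper: order components by $\cpc_T$, use Proposition~\ref{prop:eventually-small} to bound $\cpc_T(V_i)\le 3k/i_0$ for $i\ge i_0$, invoke $(\alpha,c)$-denseness with $i_0=4\alpha$ to get $|V_i|\le(3k/4)^c$, bound the tail sum by $O(k)\cdot(3k/4)^c$, and solve $(4/3)^c\ge k^{i_0+O(1)}$ for $c=\Theta(\alpha\log k)$ before invoking Lemma~\ref{lemma:updated-criterion}. The paper's proof is exactly this, with the crude bound $s\le k$ in place of your $s\le 2k$; your aside about refining to a geometric sum is unnecessary but harmless. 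One small correction: the denseness hypothesis in Definition~\ref{def:set-finder} does not require $S\cap T\neq\emptyset$, so your concern about terminal-free components is moot (and as you note, a minimum multiway cut has none anyway); the $|S|\le|V|/2$ precondition you raise is a genuine corner case that the paper itself silently ignores, but it is easily dispatched since at most one component can violate it and that component is necessarily among the $i_0-1$ largest by size, which is the ordering Lemma~\ref{lemma:updated-criterion} actually uses.
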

\begin{proof}
  Let $\cT$ be a partition of $T$ and let $X$ be a minimum multiway
  cut for $\cT$.  Let $V=V_1 \cup \ldots \cup V_s$ be the connected
  components of $G-X$ sorted by decreasing value of $\cpc_T(V_i)$. 
  By Prop.~\ref{prop:eventually-small}, $\cpc_T(V_i) \leq 3k/i_0$
  for $i \geq i_0$; hence $|V_i| \leq (\alpha 3k/i_0)^c = (3k/4)^c$ by the denseness
  guarantee on $(G,T)$. Thus
  \[
    \sum_{i=i_0}^s |V_i| \leq s \cdot (3k/4)^c \leq k \cdot k^c \cdot (3/4)^c.
  \]
  For $c = \log_{4/3}(k^{i_0+1})=\Theta(\alpha \log k)$, this is upper-bounded
  by $k^{c-i_0}$, hence $X$ is covered by Lemma~\ref{lemma:updated-criterion}.
  Hence $Z$ contains all essential edges. 
\end{proof}

In Theorem~\ref{thm:main}, we combine this with the decomposition
described previously to show the existence of a multicut-covering set
with $k^{O(\log k)}$ edges.

\subsection{A constructive result} \label{sec:constructive}

Now, with some slightly more involved calculations and a higher degree
in the exponent, we show that this can be achieved constructively.
The marking process is the same as in the last section, but we need to
be more careful with the constants.

For this section, let $c=\lfloor (\log n)/(\log k)\rfloor-1$, to
ensure that the number of edges marked is $k^c < n$.
We will treat this as $c=(1-o(1))\log n/\log k$.
We show that such a marking process is possible until we reach a bound
of $c=O(\log^3 k)$. Hence, we assume $|V| = k^{\Omega(\log^3 k)}$ in
the sequel.

We use the \textsc{Small Set Expansion} approximation algorithm of
Bansal et al.~\cite{BansalFKMNNS14SICOMP}, which has a ratio
of $\alpha(|S|) = O(\log n/\sqrt{\log |S|})$. 
By Lemma~\ref{lemma:gives-approx}
we may then assume that for any non-empty set $S \subseteq V$, 
$|S|  \leq |V|/2$, we have $\cpc_T(S) \geq |S|^{1/c}/\alpha(|S|)$.

By Lemma~\ref{lemma:updated-criterion}, we wish to find a threshold
value $i_0$ such that the total number of vertices in components $V_i$
for $i \geq i_0$ is at most $k^{c-i_0}$, where it suffices to show
that $|V_i| \leq k^{c-i_0-1}$ for every $i \geq i_0$. 
We set $i_0=\Theta(\sqrt{c \log k})$ with a constant factor to be decided.

Let $(G,T)$ be $(\alpha,c)$-dense with $\alpha$ and $c$ as above.
Let $\cT$ be a partition of $T$ and $X$ a minimum multiway cut for $\cT$.
We show that the marking process with parameters $c$ and $i_0$
cover all essential edges of $X$. 

\begin{lemma}
  Let $V=V_1 \cup \ldots \cup V_s$ be the partition corresponding to
  connected components of $G-X$, ordered in decreasing value of $\cpc_T(V_i)$.
  With the above parameters, $|V_i| \leq k^{c-i_0-1}$ for every $i \geq i_0$.
\end{lemma}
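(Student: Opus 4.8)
The plan is to combine the structural bound from Proposition~\ref{prop:eventually-small} with the density guarantee, exactly as in Lemma~\ref{lemma:existence-size}, but tracking the $|S|$-dependence of $\alpha$. First I would apply Proposition~\ref{prop:eventually-small}: for $i \geq i_0$ we have $\cpc_T(V_i) \leq 3k/i \leq 3k/i_0$. Since $(G,T)$ is $(\alpha,c)$-dense with $\alpha(|S|)=O(\log n/\sqrt{\log|S|})$, the density guarantee reads $\cpc_T(V_i) \geq |V_i|^{1/c}/\alpha(|V_i|)$, i.e.\ $|V_i| \leq (\alpha(|V_i|)\cpc_T(V_i))^c \leq (\alpha(|V_i|)\cdot 3k/i_0)^c$. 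The obstacle here is that $\alpha(|V_i|)$ depends on the very quantity we are bounding, so I would argue via a crude a priori bound: $|V_i| \leq |V| = k^{\Theta(\log^3 k)}$, hence $\log|V_i| \leq O(\log^3 k\cdot\log k) = O(\log^4 k)$ and $\sqrt{\log|V_i|}=O(\log^2 k)$, while on the other side $\log n = \log|V| = \Theta(\log^3 k \cdot \log k)=\Theta(\log^4 k)$ (using $\log n \approx c\log k$ with $c=\Theta(\log^3 k)$). So $\alpha(|V_i|) = O(\log n/\sqrt{\log|V_i|})$; the point is to get this down to roughly $O(\log^2 k)$ or so. Actually the cleaner route: use $\log|V_i|\geq \log(\text{something})$ only when $|V_i|$ is large, and when $|V_i|$ is small the desired inequality $|V_i|\leq k^{c-i_0-1}$ is trivial for $c$ large. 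So I would split into the case $|V_i| \leq k^{c-i_0-1}$ (done) and the case $|V_i| > k^{c-i_0-1}$, in which $\log|V_i| > (c-i_0-1)\log k = (1-o(1))c\log k$, giving $\sqrt{\log|V_i|} \geq (1-o(1))\sqrt{c\log k}$.

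In that second case I would plug this lower bound on $\sqrt{\log|V_i|}$ into $\alpha(|V_i|)=O(\log n/\sqrt{\log|V_i|})$. With $\log n \leq (c+1)\log k$ we get $\alpha(|V_i|) \leq O\!\left(\frac{(c+1)\log k}{\sqrt{c\log k}}\right) = O(\sqrt{c\log k})$. Now substitute back: $|V_i| \leq (\alpha(|V_i|)\cdot 3k/i_0)^c \leq \left(\frac{O(\sqrt{c\log k})\cdot 3k}{i_0}\right)^c$. Choosing $i_0 = C\sqrt{c\log k}$ for a sufficiently large constant $C$ makes the factor $O(\sqrt{c\log k})/i_0$ a small constant, say $\leq 3/4$ before multiplying by the $k$, so $|V_i| \leq (3k/4)^c \cdot (\text{const})^c$; pushing the constant into $i_0$ one gets $|V_i| \leq (k/2)^c$ or similar, hence $|V_i| \leq k^{c}/2^c$. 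Then I need $k^c/2^c \leq k^{c-i_0-1}$, i.e.\ $2^c \geq k^{i_0+1}$, i.e.\ $c \geq (i_0+1)\log k$. Since $i_0 = \Theta(\sqrt{c\log k})$, this reads $c \geq \Theta(\sqrt{c\log k}\cdot\log k)$, i.e.\ $c \geq \Theta(\sqrt{c}\cdot\log^{3/2} k)$, i.e.\ $\sqrt c \geq \Theta(\log^{3/2} k)$, i.e.\ $c \geq \Theta(\log^3 k)$ — which is exactly the regime the section assumes, closing the argument.

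The main obstacle, as flagged, is the circularity in $\alpha(|V_i|)$: because $\alpha$ depends on $|S|$, one cannot simply treat it as a constant and the naive substitution $|S| \leq (\alpha(|S|)\cdot 3k/i_0)^c$ is an implicit inequality. The resolution is the dichotomy above: when $|V_i|$ is already below the target $k^{c-i_0-1}$ there is nothing to prove, and when it is above, that lower bound on $|V_i|$ itself certifies $\log|V_i| = \Omega(c\log k)$, which is precisely what makes $\alpha(|V_i|)=O(\sqrt{c\log k})$ small enough for the density-forcing step to win. The only care needed is to verify that all the $o(1)$ terms (from $c = (1-o(1))\log n/\log k$ and from $|V_i|-k \approx |V_i|$ etc.) are genuinely lower-order at the scale $c=\Theta(\log^3 k)$; these are routine and I would dispatch them with a sentence. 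A final cosmetic point: I should state the exact constant in $i_0=\Theta(\sqrt{c\log k})$ chosen so that the collected constants satisfy $2^c\geq k^{i_0+1}$ given $c=\Omega(\log^3 k)$, and note that the conclusion $|V_i|\leq k^{c-i_0-1}$ for all $i\geq i_0$ then feeds directly into Lemma~\ref{lemma:updated-criterion} via $\sum_{i\geq i_0}|V_i| \leq s\cdot k^{c-i_0-1} \leq k\cdot k^{c-i_0-1} = k^{c-i_0}$.
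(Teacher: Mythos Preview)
Your proposal is correct and follows essentially the same approach as the paper: assume for contradiction that $|V_i|>k^{c-i_0-1}$, use this lower bound on $|V_i|$ to control $\alpha(|V_i|)=O(\log n/\sqrt{\log|V_i|})=O(\sqrt{c\log k})$, combine with $\cpc_T(V_i)\leq 3k/i_0$ from Proposition~\ref{prop:eventually-small} and the density guarantee to get $|V_i|\leq(\alpha(|V_i|)\cdot 3k/i_0)^c$, set $i_0=\Theta(\sqrt{c\log k})$ so the parenthesised factor is at most $k/2$, and close with the check $2^c\geq k^{i_0+1}$, which reduces to $c=\Omega(\log^3 k)$. The paper carries out exactly this computation, with the explicit choice $i_0=6p\sqrt{c\log k}$ (where $p$ is the constant hidden in the $O(\cdot)$ of $\alpha$) and the same final arithmetic $k^{i_0}=2^{\Theta(\sqrt{c}\,\log^{3/2}k)}\leq 2^c$.
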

\begin{proof}
  Assume for a contradiction that for some $i \geq i_0$,
  $|V_i| > k^{c-i_0-1}$.  By the SSE approximation we use,
  for some $p=O(1)$ we then have
  \[
    \alpha(|V_i|) \leq \frac{p\log n}{\sqrt{(c-i_0-1)\log k}} \leq
    (1+o(1)) \frac{pc\sqrt{\log k}}{\sqrt{c-i_0-1}},
  \]
  where we can absorb the $1+o(1)$ factor into $p$.
  Furthermore fix the constant in $i_0$ so that $i_0 = 6p\sqrt{c \log k}$. 
  By Prop.~\ref{prop:eventually-small} we have
  $\cpc_T(V_i) \leq 3k/i \leq 3k/i_0$, hence the denseness guarantee is
  \[
    |V_i| \leq (3\alpha k/i_0)^c =
    k^c \cdot \left(\frac{3pc\sqrt{\log k}}{6p\sqrt{c \log k}\sqrt{c-i_0-1}}\right)^c =
    k^c \cdot \left(\frac 1 2 \cdot \sqrt{\frac{c}{(c-i_0-1)}}\right)^c.
  \]
  Since $c = \Omega(\log^3 k)$, we have $i_0=\Theta(\sqrt{c \log k})=o(c)$.
  Hence $c/(c-i_0-1)=1+(i_0+1)/(c-i_0-1)=1+o(1)$ and we claim
  \[
    |V_i| \leq k^c \cdot \left( \frac{1+o(1)}{2}\right)^c \leq k^{c-i_0-1},
  \]
  contradicting our assumption. 
  Indeed, $k^{i_0}=2^{\Theta(c^{1/2}\log^{3/2} k)} \leq 2^c$
  for some $c=\Omega(\log^3 k)$.  Thus the contradiction is
  complete and $|V_i| \leq k^{c-i_0-1}$. 
\end{proof}

Since there are at most $k$ components, it follows
that the number of vertices in total in components $V_i$,
$i \geq i_0$, is less than $k^{c-i_0}$, as required. 
Hence we have the following. 

\begin{lemma} \label{lemma:approximate-success}
  If $|V(G)| \geq k^{\Omega(\log^3 k)}$ and $(G,T)$ is $(\alpha,c)$-dense as above,
  then there is a process that marks all essential edges of $(G,T)$
  while leaving at least one edge unmarked. 
\end{lemma}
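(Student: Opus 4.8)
The plan is to assemble the pieces established in Sections~\ref{sec:rec-decomp} and~\ref{sec:dense} into the marking routine and verify that, under the stated density and size hypotheses, it marks every essential edge while leaving some edge untouched. First I would fix the parameters exactly as specified: $c = \lfloor (\log n)/(\log k)\rfloor - 1$, so that the product form of the representative sets lemma (Lemma~\ref{lemma:repset-product}) guarantees $|Z| \leq k^c < n = |V(G)|$; and $i_0 = 6p\sqrt{c\log k}$ with $p = O(1)$ the constant coming from the Bansal et al.\ approximation ratio $\alpha(|S|) = O(\log n/\sqrt{\log|S|})$. With these choices in hand, I would run the marking step of Section~\ref{sec:dense}: build the matroid $M = M_1 \uplus \cdots \uplus M_{i_0+1}$ as the disjoint union of $i_0-1$ copies of the edge-cut gammoid $M(T)$, one copy of the truncated graphic matroid $M_G(k^{c-i_0})$, and one copy of $U(E,k)$, form $F = \{t(e) \mid e \in E\}$, and compute a representative set $\hat F$, letting $Z$ be the edges it represents.

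The correctness argument then proceeds by fixing an arbitrary partition $\cT$ of $T$ and a minimum multiway cut $X$ for $\cT$, and showing $X$ is covered in the sense of Lemma~\ref{lemma:updated-criterion}, i.e.\ that $|\bigcup_{i \geq i_0} V_i| \leq k^{c-i_0}$ where $V_1,\ldots,V_s$ are the components of $G-X$ ordered by decreasing $\cpc_T(V_i)$. Here I invoke the lemma immediately preceding the target statement: it shows $|V_i| \leq k^{c-i_0-1}$ for every $i \geq i_0$. Since $G-X$ has at most $k$ components (because $E(T,V)$ is always a multiway cut, so $|X| \leq k$), summing gives $|\bigcup_{i\geq i_0} V_i| \leq k \cdot k^{c-i_0-1} = k^{c-i_0}$, exactly the hypothesis of Lemma~\ref{lemma:updated-criterion}. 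Therefore every edge essential for $\cT$ lies in $Z$; ranging over all partitions $\cT$ shows $Z$ contains all essential edges of $(G,T)$. Finally, since $|Z| \leq k^c < n = |V(G)|$ and $|E(G)| \geq |V(G)| - 1$ (indeed we may assume $|E(G)|$ is much larger, as $|V(G)| = k^{\Omega(\log^3 k)}$), there is at least one unmarked edge, completing the statement.

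The main thing to pin down — and really the only non-routine point — is the threshold calculation already carried out in the lemma just above: one must check that the density guarantee $\cpc_T(S) \geq |S|^{1/c}/\alpha(|S|)$, combined with $\cpc_T(V_i) \leq 3k/i_0$ from Proposition~\ref{prop:eventually-small}, forces $|V_i|$ small enough. The delicate step is confirming that the constant in $i_0 = 6p\sqrt{c\log k}$ makes $\left(\tfrac12\sqrt{c/(c-i_0-1)}\right)^c \leq k^{-i_0-1}$, which hinges on $c = \Omega(\log^3 k)$ so that $i_0 = o(c)$ and hence $\sqrt{c/(c-i_0-1)} = 1+o(1)$, while simultaneously $k^{i_0} = 2^{\Theta(\sqrt{c}\log^{3/2}k)} \leq 2^c$. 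This is precisely where the cubic exponent $O(\log^3 k)$ enters, and where the analysis would break down for a weaker lower bound on $c$; everything else is bookkeeping over the already-established lemmas.
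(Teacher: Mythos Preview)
Your proposal is correct and matches the paper's approach exactly: the paper treats this lemma as an immediate consequence of the preceding lemma (bounding $|V_i| \leq k^{c-i_0-1}$ for $i \geq i_0$) together with the observation that there are at most $k$ components, yielding $\sum_{i \geq i_0}|V_i| \leq k^{c-i_0}$ so that Lemma~\ref{lemma:updated-criterion} applies, while $|Z| \leq k^c < n$ ensures an unmarked edge remains. Your write-up is in fact more detailed than the paper's own, which simply states ``Hence we have the following'' before the lemma.
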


\subsection{Completing the result}

We now put the pieces together to show the non-constructive and
constructive bounds on the size of a multicut-covering set. 


\begin{theorem}[Theorem~\ref{thm:intro:main} restated]
  \label{thm:main}
  Let $A$ be a quasipolynomial expansion tester with ratio $\alpha(n,k)$.
  Let $(G,T)$ be a terminal network with $\cpc_G(T)=k$. There is a multicut-covering
  set $Z \subseteq E(G)$ with $|Z| \leq k^{O(\alpha(n,k) \log k)}$, which furthermore can be
  computed in randomized polynomial time using  calls to $A$. 
\end{theorem}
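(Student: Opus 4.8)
The plan is to prove Theorem~\ref{thm:main} by induction on $|V(G)|$, combining the marking procedure of Section~\ref{sec:dense} with the recursive decomposition described in Section~\ref{sec:rec-decomp}, using the expansion tester to decide which case to apply. First I would fix the target bound: set $c = \Theta(\alpha(n,k)\log k)$ as prescribed by Lemma~\ref{lemma:existence-size} (in the constructive regime, $c = O(\log^3 k)$ via Lemma~\ref{lemma:approximate-success}), so that the marking step produces at most $k^c$ edges, which is exactly the size bound claimed for $Z$. The base case is when $|E(G)| \leq k^c$, where we simply take $Z = E(G)$ and there is nothing to prove. For the inductive step, we run the expansion tester on $(G,T)$ with parameter $c$.

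If the tester certifies that $(G,T)$ is $(\alpha,c)$-dense (case 2 of Definition~\ref{def:set-finder}), then by Lemma~\ref{lemma:main-dense} (respectively Lemma~\ref{lemma:approximate-success} in the constructive case) the marking procedure outputs a set $Z \subseteq E(G)$ with $|Z| \leq k^c$ containing every essential edge of $(G,T)$. As long as $|E(G)| > k^c$, there is an unmarked edge $e$; by definition $e$ is non-essential, so by Proposition~\ref{prop:contract} the contraction $G/e$ is a multicut-mimicking network for $(G,T)$ and any multicut-covering set for $(G/e,T)$ is one for $(G,T)$. We recurse on $(G/e,T)$; note $\cpc(T)$ is unchanged since contraction of a non-terminal-incident edge, or more carefully an edge not in $\partial(\{t\})$, preserves terminal degrees — here one should be slightly careful and contract only an unmarked edge that is not incident to a terminal, which is possible since $E(T,V)$ has at most $k$ edges and $k^c \gg k$. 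If instead the tester returns a set $S$ with $N[S] \neq V(G)$ and $\cpc_T(S) < |S|^{1/c}$, we form the recursive instance $(G_S, T(S))$ with $\cpc_{G_S}(T(S)) = \cpc_T(S) =: k' < |S|^{1/c} \leq n^{1/c}$, recursively compute a multicut-covering set $Z_S$ for it of size at most $(k')^{O(\alpha \log k')}$; since $k' < |S|^{1/c}$ we have $(k')^c < |S| \leq |V(G)|$, so $|Z_S| < |S| \leq n$, hence there is an edge $e \in E(G_S) \setminus Z_S$. By Lemma~\ref{lemma:recursive-may-contract}, $e$ is non-essential for $(G,T)$, so again we contract $e$ via Proposition~\ref{prop:contract} and recurse on $(G/e,T)$.

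In all cases we replace $G$ by a graph with strictly fewer edges while preserving multicut-mimicking behaviour, so the recursion terminates, and the final graph has $|E| \leq k^c = k^{O(\alpha(n,k)\log k)}$; the surviving edges form the desired multicut-covering set. For the running time, each step is a single call to the expansion tester plus a polynomial-time marking computation (Lemma~\ref{lemma:repset-product}) or a recursive call on an instance with a strictly smaller vertex set; the outer recursion has at most $|E(G)|$ iterations, and the nested recursion at $S$ operates on at most $|S| \leq |V(G)|/2$ vertices, so a standard argument bounds the total number of tester calls and total work by a polynomial in $n$. One subtlety worth flagging: in the constructive branch the parameter $c$ is tied to $\log n/\log k$ rather than being fixed globally (see the setup of Section~\ref{sec:constructive}), so when we recurse on a smaller graph $G_S$ the value of $c$ may change; one must check that the bound $k^{O(\log^3 k)}$ is maintained across recursive calls and that $c$ stays $\Omega(\log k)$ as required by the tester — this bookkeeping, rather than any conceptual difficulty, is the main obstacle. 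The other point needing care is ensuring the contracted edge in the dense case can always be chosen non-terminal-incident so that $k = \cpc(T)$ is genuinely invariant throughout.
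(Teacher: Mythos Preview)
Your proposal follows the same route as the paper: fix $c=\Theta(\alpha\log k)$ via Lemma~\ref{lemma:existence-size}, then iterate, in each round calling the tester and either (dense case) invoking Lemma~\ref{lemma:main-dense} to mark at most $k^c$ edges and contract an unmarked one, or (sparse-cut case) recursing on $(G_S,T(S))$ and using Lemma~\ref{lemma:recursive-may-contract} to locate a non-essential edge, contracting via Proposition~\ref{prop:contract}. Two remarks.

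Your worry about preserving $\cpc(T)=k$ under contraction is well-placed and in fact more careful than the paper, which is silent on it; your fix (avoid contracting terminal-incident edges, of which there are at most $k\ll k^c$) is adequate. On the other hand, the parenthetical references to Lemma~\ref{lemma:approximate-success} and the $c\approx\log n/\log k$ bookkeeping are out of scope here: Theorem~\ref{thm:main} treats $\alpha$ as a fixed function of $(n,k)$ and simply sets $c=\Theta(\alpha\log k)$ once; the more delicate analysis with $\alpha$ depending on $|S|$ is deferred to Corollary~\ref{corollary:actual}, so you can drop that discussion from this proof.

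There is one small genuine gap in the sparse-cut branch. You write ``$|Z_S|<|S|$, hence there is an edge $e\in E(G_S)\setminus Z_S$'', but $|Z_S|<|S|$ compares $|Z_S|$ to a \emph{vertex} count, not to $|E(G_S)|$; nothing so far forces $G_S$ to have more edges than $|Z_S|$. The paper patches this by first eliminating any non-terminal vertex of $G_S$ of degree at most $2$ (deleting leaves, contracting through degree-$2$ vertices), observing that such a vertex has the same degree in $G$ and is also a non-terminal there, so the operation is safe in $G$ as well. After this reduction every non-terminal of $G_S$ has degree $\ge 3$, which yields $|E(G_S)|>k_S^c\ge |Z_S|$ and hence the desired leftover edge. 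You should include this step.
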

\begin{proof}
  Set $c=\Theta(\alpha \log k)$ as in Lemma~\ref{lemma:existence-size}. 
  If $|E(G)| \leq k^c$ then return $Z=E(G)$. Otherwise, we compute a
  non-essential edge $e$ as follows.  Call $A$ on $(G,T,c)$. 
  If $A$ reports that $(G,T)$ is $(\alpha,c)$-dense, then Lemma~\ref{lemma:main-dense}
  applies. Compute a set $Z$ containing all essential edges,
  with $|Z| \leq k^c$, guaranteeing that there is a non-essential edge
  $e \in E(G) \setminus Z$.
  
  If $A$ returns a set $S \subseteq V(G)$, let $k_S=\cpc_T(S)$.  Let $(G_S, T(S))$ be
  the recursive instance at $S$, and note that $|V(G_S)|=|N_G[S]|<|V|$
  and $|S| > k_S^c$ by definition of $A$. 
  We may now proceed by induction on $|V|$ and assume that we can
  compute a multicut-covering set $Z_S \subseteq E(G_S)$ of size
  $|Z_S| < k_S^c$. To eliminate a corner case, if there is a vertex
  $v \in V(G_S)$ with $v \notin T(S)$ and $d_{G_S}(v)\leq 2$, then 
  delete $v$ if $v$ is a leaf, otherwise contract one edge
  incident with $v$.  Note that since $v \notin T(S)$ we have
  $d_G(v)=d_{G_S}(v)$ and $v \notin T$, hence these reduction rules
  are clearly correct. If this rule does not apply, there must be some edge
  $e \in E(G_S) \setminus Z_S$, and by construction $e$ corresponds
  directly to an edge in $G$. Again, we have found a non-essential edge.

  By Prop.~\ref{prop:contract} we may now contract $e$ in $G$ and repeat.
  This yields a graph $G'$ with $|V(G')|<|V|$, hence by induction we
  can create a multicut-covering set $Z$ for $G'$, which is also a
  multicut-covering set of $G$ by Prop.~\ref{prop:contract}.
  Hence we can compute a multicut-covering set $Z$ with $|Z| \leq k^c$. 
\end{proof}

We observe the following consequences.

\begin{corollary} \label{corollary:actual}
  Let $(G,T)$ be a terminal network with $\cpc_G(T)=k$. The following holds.
  \begin{enumerate}
  \item There is a multicut-mimicking network for $(G,T)$ with $k^{O(\log k)}$ edges.
  \item A multicut-mimicking network with $k^{O(\log^3 k)}$ edges can
    be computed in randomized polynomial time.
  \end{enumerate}  
\end{corollary}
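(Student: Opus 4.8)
The plan is to read off Corollary~\ref{corollary:actual} directly from Theorem~\ref{thm:main} together with the two auxiliary facts established earlier: the conversion between a multicut-covering set and a multicut-mimicking network (the remark following Proposition~\ref{prop:multi-equals-multiway}, namely that contracting all edges outside a multicut-covering set $Z$ yields a multicut-mimicking network with $|Z|$ edges, formalized by Proposition~\ref{prop:contract}), and the two instantiations of the expansion tester discussed in Section~\ref{sec:graphsep}. So the bulk of the work is just bookkeeping: identify which value of $\alpha$ to plug into the $k^{O(\alpha(n,k)\log k)}$ bound in each of the two cases, and argue that the resulting edge count is indeed $k^{O(\log k)}$ respectively $k^{O(\log^3 k)}$.

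For part~1 (existence), I would not invoke Theorem~\ref{thm:main} at all but rather appeal to the non-constructive marking analysis of Section~\ref{sec:dense}: Lemma~\ref{lemma:existence-size} shows that if $(G,T)$ is $(\alpha,c)$-dense for a \emph{constant} $\alpha$ and $c=\Theta(\alpha\log k)=\Theta(\log k)$, then the marking step captures all essential edges while marking at most $k^c=k^{O(\log k)}$ of them; and when $(G,T)$ is not $(\alpha,c)$-dense, the recursive decomposition of Section~\ref{sec:rec-decomp} (Lemma~\ref{lemma:recursive-may-contract}, Proposition~\ref{prop:contract}) lets us contract a non-essential edge and recurse on a strictly smaller graph. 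Iterating until $|E(G)|\le k^c$ produces a multicut-covering set of $k^{O(\log k)}$ edges; contracting its complement gives the mimicking network. The point here is that this step does \emph{not} need an efficient expansion tester — it is a purely existential statement, since we only need to \emph{know} that either a good marking exists or a sparse cut exists, and one of the two always holds. Then $|E(G')|=k^{O(\log k)}$ is immediate.

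For part~2 (constructive), I would combine Theorem~\ref{thm:main} with Lemma~\ref{lemma:gives-approx} and the Bansal et al.~\cite{BansalFKMNNS14SICOMP} bicriteria SSE algorithm, which has ratio $O(\sqrt{\log n\log(1/\rho)})$, yielding a quasipolynomial expansion tester with $\alpha(|S|)=O(\log n/\sqrt{\log|S|})$. This $\alpha$ is \emph{not} constant, so Lemma~\ref{lemma:existence-size} no longer applies verbatim; instead I would invoke the refined analysis of Section~\ref{sec:constructive}, specifically Lemma~\ref{lemma:approximate-success}, which shows that as long as $|V(G)|\ge k^{\Omega(\log^3 k)}$ and $(G,T)$ is dense in the appropriate sense, the marking step still captures all essential edges while leaving at least one edge to contract. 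When $(G,T)$ is not dense, the expansion tester returns a sparse set $S$ and we recurse on $(G_S,T(S))$ exactly as in Theorem~\ref{thm:main}. Each step either contracts an edge directly or recurses on a graph with fewer vertices, so the process terminates in polynomially many steps, each running in randomized polynomial time (the matroid representations, truncations and representative-set computations of Section~\ref{sec:dense} are all randomized poly-time, as is the SSE approximation and the base-case check for small bisections). The procedure halts once $|V(G)|\le k^{O(\log^3 k)}$, so $|E(G')|\le\binom{|V(G')|}{2}=k^{O(\log^3 k)}$.

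The main obstacle — already handled in Section~\ref{sec:constructive} but worth being careful about in writing the proof — is the threshold bookkeeping in the constructive case: one must check that the value $c=\lfloor\log n/\log k\rfloor-1$ stays $\omega(\log^3 k)$ throughout the recursion (so that the $(1+o(1))/2$-factor estimate in the proof of Lemma~\ref{lemma:approximate-success} still beats $k^{-i_0}$), that the recursive instances $(G_S,T(S))$ inherit $\cpc_{G_S}(T(S))=\cpc_T(S)\le k$ so the parameter does not blow up, and that shrinking $n$ down the recursion only \emph{helps}, since $\alpha$ depends on $n$ as $O(\log n)$. Once these monotonicity points are in place, the edge-count bound and the running-time bound both follow, and the corollary is established.
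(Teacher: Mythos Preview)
Your proposal is correct and follows essentially the same approach as the paper. The only cosmetic difference is in part~1: the paper dispatches it in one line by invoking Theorem~\ref{thm:main} with $\alpha(n,k)=1$ (where the ``tester'' need not be efficient since only existence is claimed), whereas you unpack that same argument directly via Lemma~\ref{lemma:existence-size} and the recursive decomposition; for part~2 both you and the paper proceed via Lemma~\ref{lemma:gives-approx}, the Bansal et al.\ algorithm, and Lemma~\ref{lemma:approximate-success}, recursing on sparse cuts as in Theorem~\ref{thm:main} until the edge count drops to $k^{O(\log^3 k)}$.
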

\begin{proof}
  The first is immediate using $\alpha(n,k)=1$. 
  For the second, we need to use Lemma~\ref{lemma:approximate-success}.
  We assume that $|E(G)| > k^{\Omega(\log^3 k)}$ as required by
  Lemma~\ref{lemma:approximate-success}, or else we return $(G,T)$.
  As in Theorem~\ref{thm:main} it suffices to locate a single
  non-essential edge. By Bansal et al.~\cite{BansalFKMNNS14SICOMP}
  and Lemma~\ref{lemma:gives-approx}, there is a quasipolynomial expansion tester
  with a ratio $\alpha(|S|) = O(\log n/\sqrt{\log |S|})$.
  Call this algorithm with $(G,T,c)$ where $c =(1-o(1)) \log n/\log k$
  as in Section~\ref{sec:constructive}.  If it finds a set $S$,
  recurse on $S$ as in Theorem~\ref{thm:main}; otherwise
  Lemma~\ref{lemma:approximate-success} applies.
  In both cases we locate a non-essential edge.
  Eventually we reach the threshold where $|E(G)|=k^{O(\log^3 k)}$
  and use $Z=E(G)$ as multicut-covering set. We then return the
  resulting terminal network $(G,T)$ as multicut-covering set.
\end{proof}

Additionally, if better approximation ratios for \textsc{Small Set
  Expansion} exist, in particular ratios $\alpha=O(\log k)$ or better,
then we can improve $k^{O(\log^3 k)}$ to a constructive bound of size $k^{O(\alpha \log k)}$.

\subsection{Kernelization extensions and consequences}

As noted, we get the following consequences.

\begin{corollary} \label{cor:kernels}
  The following problems have randomized quasipolynomial kernels. 
  \begin{enumerate}
  \item\label{res:1} \textsc{Edge Multiway Cut} parameterized by solution size.
  \item \textsc{Edge Multicut} parameterized by the solution size and the number of cut requests.
  \item \textsc{Group Feedback Edge Set} parameterized by solution size, for any group.
  \item \label{res:penultimate} \textsc{Subset Feedback Edge Set} with undeletable edges, parameterized by
    solution size.
  \end{enumerate}
\end{corollary}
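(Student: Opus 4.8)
The plan is to derive each kernelization result from Corollary~\ref{corollary:actual} together with known reduction rules that bound the total terminal capacity $\cpc(T)$ by a polynomial in the parameter~$k$.  The common template is: (i) given an instance of the target problem with parameter~$k$, apply standard polynomial-time preprocessing so that the relevant terminal set~$T$ has $\cpc_G(T)=k^{O(1)}$; (ii) replace $(G,T)$ by the multicut-mimicking network~$(G',T)$ of Corollary~\ref{corollary:actual}(2), which has $k^{O(\log^3 k)}$ edges and can be computed in randomized polynomial time; (iii) argue that the resulting instance on $G'$ is equivalent to the original and has size $k^{O(\log^3 k)}$.  The only thing to verify for each problem is that it can genuinely be expressed as a multicut/multiway-cut question over a terminal set whose capacity is polynomially bounded, and that the multicut-mimicking guarantee (preserving all minimum multicut sizes over cut-request sets $R\subseteq\binom{T}{2}$) is the right invariant.

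\textbf{\textsc{Edge Multiway Cut}.}  Here $T$ is given and the task is a multiway cut for the partition $\{\{t\}\mid t\in T\}$.  First I would observe that if $d_G(t)>k$ for a terminal~$t$ then one can split~$t$ or (better) note that in a yes-instance no terminal needs degree exceeding~$k$; more carefully, use the standard rule from~\cite{KratschW20JACM} for $s$-\textsc{Multiway Cut}: bound $|T|$ is not available, but one can still bound $\cpc_G(T)$: contract any edge between two terminals (reducing $|T|$), and for each terminal the ``useful'' incident edges are at most those in some min multiway cut, so after the reduction rules of~\cite{KratschW20JACM} one gets $\cpc_G(T)=O(k^2)$ (each of the at most~$k$ edges of the solution is incident to few terminals).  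Then apply Corollary~\ref{corollary:actual}(2) to $(G,T)$; by Proposition~\ref{prop:multi-equals-multiway} the minimum multiway-cut value for the all-singletons partition is preserved, so $(G',T,k)$ is an equivalent instance with $k^{O(\log^3 k)}$ edges.

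\textbf{The remaining three problems} are handled by the reductions already present in~\cite{KratschW20JACM} that turn them into \textsc{Edge Multiway Cut} / multicut instances; I would simply cite those and then invoke the above.  \textsc{Edge Multicut} with $r$ cut requests: the request set is exactly an $R\subseteq\binom{T}{2}$ with $|T|\le 2r$, reduction rules bound $\cpc_G(T)$ by $\mathrm{poly}(k,r)=\mathrm{poly}(k)$ (as $r\le k$ in any yes-instance after trivial cleanup), and the multicut-mimicking network preserves the minimum multicut size for this very $R$.  \textsc{Group Feedback Edge Set} (any group $\Sigma$): using the reduction of~\cite{KratschW20JACM} one introduces $O(k)$ ``conflict'' terminals whose total capacity is $\mathrm{poly}(k)$ and whose separation requirements form a multicut instance; compressing the graph via $(G',T)$ preserves these.  \textsc{Subset Feedback Edge Set} with undeletable edges reduces to \textsc{Edge Multiway Cut} (or \textsc{Group Feedback Edge Set}) by the same token.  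In each case the kernel is ``randomized'' because computing $(G',T)$ is, and ``quasipolynomial'' because $|E(G')|=k^{O(\log^3 k)}$ while the rest of the instance description (terminals, requests, undeletable edges) is already of size $\mathrm{poly}(k)$.

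\textbf{Main obstacle.}  The substantive point—and the step I expect to require the most care—is establishing in each case that a \emph{polynomial} bound on $\cpc_G(T)$ holds after polynomial-time preprocessing, since the mimicking-network guarantee is stated in terms of $k=\cpc_G(T)$, not the solution size; for \textsc{Multiway Cut} this is exactly the ingredient behind the $O(k^{s+1})$-vertex kernel for the bounded case, and for the other three problems one must invoke the corresponding reductions of~\cite{KratschW20JACM} and check that they produce only $\mathrm{poly}(k)$ terminals of total capacity $\mathrm{poly}(k)$.  Everything else is a direct application of Corollary~\ref{corollary:actual}(2) and Proposition~\ref{prop:multi-equals-multiway}, so I would keep the proof short, stating the capacity bound for each problem with a reference and then compressing.
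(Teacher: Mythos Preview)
Your overall template---bound $\cpc_G(T)$ polynomially, then invoke Corollary~\ref{corollary:actual}(2)---is exactly what the paper does, and for \textsc{Edge Multiway Cut} and \textsc{Subset Feedback Edge Set} your sketch is essentially right (the paper gets the sharper bound $\cpc_G(T)\le 2k$ for \textsc{Multiway Cut}, but your $O(k^2)$ would suffice; and \textsc{Subset FES} is indeed dispatched as a special case of GFES via the group $Z_2^S$).

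There are, however, two concrete gaps.  For \textsc{Edge Multicut}, the assertion that ``reduction rules bound $\cpc_G(T)$ by $\mathrm{poly}(k,r)$'' is not justified: the given terminals $s_i,t_i$ can have arbitrarily high degree, and there are no standard rules in~\cite{KratschW20JACM} or elsewhere that reduce this.  The paper instead uses a simple gadget: introduce \emph{fresh} terminals $s_i',t_i'$, each attached to the original $s_i$ (resp.\ $t_i$) by $p+1$ subdivided parallel edges, and move the cut requests to the new terminals.  This forces $\cpc_{G'}(T)=2r(p+1)=O(k^2)$ directly, without relying on any reduction.  Your proposal needs this construction (or an equivalent one), not a citation.

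For \textsc{Group Feedback Edge Set} over an \emph{arbitrary} group, the reduction in~\cite{KratschW20JACM} is not enough: that paper handles only groups where an approximate solution can be found by their methods (essentially bounded-size groups).  The paper here explicitly invokes the $O(\log k)$-approximation for \textsc{Biased Graph Cleaning} due to Lee and Wahlstr\"om~\cite{LeeW20plus} to obtain an approximate solution $X_0$ with $|X_0|=O(k\log k)$, then takes $T=V(X_0)$ and untangles the labels outside $X_0$ so that the GFES solution becomes a multiway cut for some unknown partition of $T$.  The existence of this approximation for arbitrary groups is a genuine external ingredient you must cite; without it the capacity bound does not follow.
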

%
\begin{proof}
  For \textsc{Edge Multiway Cut}, let $(G, T, k)$ be an input.  Known reduction rules
  can reduce the instance to so that $\cpc_G(T) \leq 2k$~\cite{KratschW20JACM}.
  From this point, the kernel follows.

  For \textsc{Edge Multicut}, let the input be
  $(G, \{(s_1,t_1) \ldots, (s_r,t_r)\}, p)$ and let $k=p+r$. 
  Create a set of $2r$ vertices $T=\{s_1',\ldots, t_r'\}$
  and $p+1$ subdivided parallel edges between $s_i'$ and $s_i$,
  and between $t_i'$ and $t_i$, for each $i \in [r]$. Let $G'$
  be the new graph.  We claim that $I=(G,\{(s_1,t_1),\ldots,(s_r,t_r)\},p)$
  is a positive instance if and only if $I'=(G', \{(s_1',t_1'), \ldots, (s_r',t_r')\}, p)$ is.
  Indeed, any multicut for $I$ is a multicut for $I'$, and every
  multicut for $I'$ containing at most $p$ edges leaves all new
  terminals $s_i'$, $t_i'$ connected to the old terminals $s_i$, $t_i$
  and is hence a multicut for $I$. Furthermore $\cpc_{G'}(T)=(p+1)r=O(k^2)$. 
  Now it suffices to compute a multicut-covering set $Z$
  for $(G',T)$ and contract all edges in $E(G') \setminus Z$.

  For \textsc{Group Feedback Edge Set} (GFES), we follow the approach of~\cite{KratschW20JACM}.
  The input to GFES is a tuple $(G,\phi,k)$, where $\phi$ is a direction-dependent
  labelling of the edges of $G$ from some multiplicative group $\Gamma$, such that for
  every $uv \in E$, $\phi(uv)=\phi(vu)^{-1}$ (where the inverse is the group inverse).
  The goal is to remove $k$ edges such that in the remaining graph, there is
  an assignment $\lambda \colon V(G) \to \Gamma$ such that for any $uv \in E(G)$
  we have $\lambda(v) = \lambda(u) \cdot \phi(uv)$.
  We will not need any assumptions about how the group elements are represented, other than
  the ability to test whether a product of elements $\phi(e)$
  equals the group identity $1_\Gamma$ or not. Refer to a simple cycle $C$ as \emph{unbalanced}
  if $\prod_{e \in E(C)} \phi(e) \neq 1_\Gamma$, with the product taken in order along $C$.
  We first note that GFES has an $O(\log k)$-approximation. 
  Indeed, GFES reduces easily to \textsc{Group Feedback Vertex Set},
  which in turn is a special case of the meta-problem
  \textsc{Biased Graph Cleaning}~\cite{Wahlstrom17SODA}. 
  Lee and Wahlstr\"om~\cite{LeeW20plus} showed that \textsc{Biased Graph Cleaning}
  admits an $O(\log k)$-approximation, using an oracle for testing whether
  cycles are unbalanced. 
  Let $X_0$ be an approximate solution with $|X_0|=O(k \log k)$. 
  Let $T=V(X_0)$ be the endpoints of $X_0$. By assumption, $G-X_0$
  admits an assignment $\lambda \colon V(G) \to \Gamma$ as above,
  and such an assignment $\lambda$ can be computed by starting with an arbitrary value
  from one vertex of each connected component. We now follow~\cite{KratschW20JACM}
  in \emph{untangling} the group labels, so that every edge except those in $X_0$
  receive the identity label by $\phi$.  As in~\cite{KratschW20JACM}, the solution to GFES now 
  corresponds to a multiway cut for some unknown partition $\cT$ of $T$,
  hence the multicut-mimicking network can be used for kernelization. 
  
  \textsc{Subset Feedback Edge Set} with undeletable edges, parameterized by
  solution size, is covered by the previous case, since it is a special case
  of GFES. Indeed, let $S \subseteq E(G)$ be the special edges. We use labels $\phi$ 
  from the group $Z_2^S$ where every edge is labelled by $\phi$ by identity
  except the edges of $S$, which flip one bit of the group element each.
  It is now easy to see that a cycle is balanced if and only if it contains
  no edge from $S$. It is furthermore easy to see that we can implement
  undeletable edges by creating parallel (subdivided) copies of edges,
  using the same group labels. 
\end{proof}

\emph{Remark.}
  In the preliminary version of this paper~\cite{selfICALP},
  we additionally claimed results for the \textsc{0-Extension}
  problem, building on results of Reidl and Wahlstr\"om~\cite{ReidlW18ICALP}.
  Unfortunately, the proof of this in~\cite{selfICALP} is incorrect,
  and we were unable to fix it.  Concretely, using some terminology
  of~\cite{ReidlW18ICALP}, let $I=((G,T),\mu,\tau)$ be an instance of
  \textsc{0-Extension} for a terminal network $(G,T)$ and a metric $\mu$.
  Assume that $I$ has an optimal solution $\lambda$ with at most $k$
  crossing edges, and that $G$ contains a sparse cut, i.e.,
  a set $S \subseteq V(G)$ such that $\cpc_T(S) < |S|^{1/c} < k$. 
  Then the proof attempt in the conference version~\cite{selfICALP}
  implicitly assumes that $\lambda$ requires only $O(\cpc_T(S))$
  crossing edges inside $G[S]$. This does not appear to hold.
  Therefore we retract our previous claims in~\cite{selfICALP}
  regarding quasipolynomial metric sparsifiers. 
  The results of Reidl and Wahlstr\"om~\cite{ReidlW18ICALP} are not
  affected by this. 

\section{Discussion}

We defined the notion of a \emph{multicut-mimicking network},
and showed that every terminal network $(G,T)$ with $k=\cpc_G(T)$
admits one of size $k^{O(\log k)}$, and that a
multicut-mimicking network of size $k^{O(\log^3 k)}$ can be computed
in randomized polynomial time. 
The mimicking network is constructed via contractions on $G$,
i.e., it simply consists of a set of edges which form a \emph{multicut-covering set}.
As a consequence of such a result, a range of parameterized problems,
starting from \textsc{Edge Multiway Cut}, have randomized quasipolynomial
kernels. 

A first question is how to bridge the gap between $k^{O(\log k)}$ and
$k^{O(\log^3 k)}$.  This may be partially possible,
depending on the precise approximation guarantee available for 
\textsc{Small Set Expansion}; but a complete bridging via this
approach would require a constant-factor approximation for SSE,
which has been conjectured not to exist under the \emph{small set expansion hypothesis}.
A more difficult question is what the correct size of a
multicut-mimicking network is in general, and whether one of
polynomial size exists and can be efficiently computed.
A positive solution would confirm the existence of a polynomial kernel
for \textsc{Edge Multiway Cut}, which is one of the most significant
open questions in kernelization.

We finally note two questions in different directions. First, all
results in this paper relate only to edge deletion problems.
Can an extension of the method be used to prove the existence or efficient computability of a
quasipolynomial mimicking network for \emph{vertex deletion} multicut
behaviour? This may introduce significant additional difficulties,
especially considering that approximation algorithms for sparse
vertex cuts are less well-developed than algorithms for
\textsc{Small Set Expansion}.

Second, the preliminary version of this paper~\cite{selfICALP}
contained mistaken claims about existence of quasipolynomial kernels
and ``metric sparsifiers'' for \textsc{0-Extension} instances, subject
to a bound on the number of crossing edges in a solution.
Can such a result be established? On the other hand, can evidence be
established against the existence of a polynomial-sized exact metric
sparsifier, depending on a bound on the number of crossing edges $k$
of a solution, or is even such a result plausible?

\bibliographystyle{plainurl}
\bibliography{allrefs,local}

\end{document}